\documentclass[12pt]{interact}
\usepackage{soul}
\usepackage{natbib}

\newcommand{\tilnu}{\tilde{\nu}}
\newcommand{\aaaa}{\mathrm{(a)}}
\newcommand{\bbbb}{\mathrm{(b)}}
\newcommand{\cccc}{\mathrm{(c)}}
\newcommand{\dddd}{\mathrm{(d)}}
\newcommand{\eeee}{\mathrm{(e)}}

\newcommand{\lp}{\left(}
\newcommand{\rp}{\right)}
\newcommand{\lb}{\left[}
\newcommand{\rb}{\right]}
\newcommand{\lbp}{\left\{}
\newcommand{\rbp}{\right\}}
\newcommand{\lba}{\left\lvert}
\newcommand{\rba}{\right\rvert}

\newcommand{\mcal}{\mathcal}
\newcommand{\mrm}{\mathrm}

\newcommand{\mbb}{\mathbb}

\newcommand{\lce}{\left\lceil}
\newcommand{\rce}{\right\rceil}
\newcommand{\lfl}{\left\lfloor}
\newcommand{\rfl}{\right\rfloor}

\newcommand{\diid}{\overset{\text{i.i.d.}}{\sim}}

\newcommand{\E}{\mathbb{E}}

\renewcommand{\Pr}{\mathbb{P}}

\newcommand{\argmin}{\mathop{\mathrm{argmin}}}

\makeatletter
\newcommand*{\indep}{%
  \mathbin{%
    \mathpalette{\@indep}{}%
  }%
}
\newcommand*{\nindep}{%
  \mathbin{
    \mathpalette{\@indep}{\not}
  }%
}
\newcommand*{\@indep}[2]{%
  \sbox0{$#1\perp\m@th$}
  \sbox2{$#1=$}
  \sbox4{$#1\vcenter{}$}
  \rlap{\copy0}
  \dimen@=\dimexpr\ht2-\ht4-.2pt\relax
  \kern\dimen@
  {#2}%
  \kern\dimen@
  \copy0 
} 
\makeatother

\theoremstyle{plain}
\newtheorem{theorem}{Theorem}
\newtheorem{corollary}{Corollary}
\newtheorem{lemma}{Lemma}
\newtheorem{remark}{Remark}



\interdisplaylinepenalty=2500 

\begin{document}

\allowdisplaybreaks

\onecolumn 

\title{Finite-Horizon  Quickest Change Detection Balancing Latency with False Alarm Probability}

\titleheader{Finite-Horizon QCD with FA Probability-Latency Tradeoff}

\author{
\name{Yu-Han Huang\thanks{CONTACT Y.-H. Huang Email: yuhanhh2@illinois.edu} and Venugopal V. Veeravalli\thanks{CONTACT V. V. Veeravalli Email: vvv@illinois.edu}}
\affil{ECE and CSL, The Grainger College of Engineering, University of Illinois Urbana-Champaign, Urbana, IL, USA}
}

\maketitle
\begin{abstract}



A finite-horizon variant of the quickest change detection (QCD) problem that is of relevance to learning in non-stationary environments is studied. The metric characterizing false alarms is the probability of a false alarm occurring before the horizon ends. The metric that characterizes the delay is \emph{latency}, which is the smallest value such that the probability that detection delay exceeds this value is upper bounded to a predetermined latency level.
The objective is to minimize the latency (at a given latency level), while maintaining a low false alarm probability. Under  the pre-specified latency and false alarm levels, a universal lower bound on the latency, which any change detection procedure needs to satisfy, is derived. Change detectors are then developed, which are order-optimal in terms of the horizon. 
The case where the pre- and post-change distributions are known is considered first, and then the results are generalized to the non-parametric  case when they are unknown except that they are  sub-Gaussian with different means.  Simulations are provided to validate the theoretical results.


\end{abstract}

\begin{keywords}
Sequential change detection, non-parametric quickest mean change detection,  generalized likelihood ratio, non-stationary learning, sub-Gaussian observations
\end{keywords}

\section{Introduction}
\label{sec:Intro}


In the quickest change detection (QCD) problem, the environment undergoes a change at an unknown \emph{change-point}, which affects the  distribution of stochastic observations of the environment. 
The goal for the agent is to detect the change as soon as possible, while controlling the false alarm rate. 
QCD problems have been extensively studied due to the broad range of applications in engineering and science, including but not limited to biomedical signal processing, epicenter monitoring, financial market, and quality control in manufacturing. See \citep{poor-hadj-qcd-book-2009, vvv_qcd_overview, tart-niki-bass-2014, xie_vvv_qcd_overview} for articles and books on this topic.

QCD problems are mathematically formulated as constrained optimization problems, in which the objective is to minimize an appropriate metric of the detection delay, under a constraint on a suitable false alarm metric. 
The most common false alarm metric is the \emph{mean time to false alarm} (or equivalently, its reciprocal, the \emph{false alarm rate}). 
Detection delay is generally measured by considering the \emph{worst-case expected detection delay} (conditioned on not having a false alarm and possibly the history of samples prior to the change-point) over all possible change-points across an infinite horizon.
These metrics, however, may not be appropriate in many practical settings. First, the horizon may be large but finite in practice, which does not fit the infinite horizon formulations common in QCD literature. 
Additionally, instead of minimizing the worst-case expected delay while suppressing the false alarm rate, one might be interested in controlling the probability of false alarm over the horizon, and that of detection delay being too large, as these two ``bad" events can possibly have a significant impact on the system. 
Lai \cite[Section II.B]{lai1998information}  addresses the false alarm aspect of such a formulation partially, by considering the goal of minimizing the worst-case expected delay while controlling the false alarm probability over a window to a prespecified level. However, the size of the window does not cover the entire horizon, and it depends on the prespecified false alarm level. 
Thus, the QCD problem in which we are interested, where we balance the probability of false alarm over the horizon with \emph{latency}, which the smallest value such that the probability the detection delay exceeds this value is upper bounded to a predetermined level,  remains largely open. A key application of this formulation is in online learning in a piecewise stationary environment, which we describe next.

In a piecewise stationary environment, the underlying distribution of the environment undergoes changes at multiple change-points, and remains stationary in between. For the case of piecewise stationary bandits and piecewise stationary finite-horizon episodic Markov decision processes, there is a rich literature on algorithms that restart as soon as they detect a change in the environment \citep{auer2019adswitch, wei2021master, besson2022efficient, yadkori2023nsb, gerogiannis2024blackboxfeas, huang2025change}. To achieve good (order-optimal) performance, these algorithms need to control the probability of false alarm and that of detection delay being too large \citep{besson2022efficient, huang2025change}. 

In this work, we formulate a QCD problem in which the metrics are the false alarm probability and latency. 
The objective is to minimize the latency (at a prespecified latency level), while maintaining the probability of false alarm below a prespecified level. Under the prespecified latency and false alarm levels, we first establish a universal lower bound on the latency that any change detector must satisfy. Then, we propose change detectors that are order-optimal in terms of the horizon (as the horizon goes to infinity) in scenarios with full knowledge of the underlying distributions. We then generalize these order-optimal change detectors to scenarios where the underlying distributions are unknown, except for being sub-Gaussian with different means.  Preliminary versions of these results are described in the following conference proceedings \citep{huang2024high, huang2025sequentialchangedetectionlearning}.

The remainder of the paper is organized as follows: The QCD problem of interest and the metrics for detection delay (latency) and false alarm are presented in Section \ref{sec:ProbForm}. A universal lower bound on latency is derived in Section \ref{sec:low_bound}. Order-optimal change detectors are discussed in Section \ref{sec:CD}. The performance of proposed change detectors is validated through simulations in Section \ref{sec:sim}. Lastly, some concluding remarks, including avenues for future research  are presented in Section \ref{sec:Sum}.

\section{Problem Formulation}
\label{sec:ProbForm}

For any positive integers $k \leq n$, we denote the set $\lbp k, \dots, n \rbp$ by $[k, n]$, and denote $[1, n]$ by $[n]$ for conciseness. Let $\lbp X_{n}: n \in [T] \rbp$ be a set of independent random vectors observed by an agent over a finite horizon $T$ sequentially. At an unknown time-step, referred to as the \emph{change-point} $\nu$, the distribution of the observations undergoes a change, i.e.,
\begin{equation} \label{eq:sample_distr}
    X_{n} \sim \begin{dcases}
        f_{0}, & n \in [\nu - 1] \\
        f_{1}, & n \in [\nu, T]
    \end{dcases}.
\end{equation}
This indicates that the observations before the change-point $\nu$ follow the pre-change density $f_{0}$, while those after the change-point follow the post-change density $f_{1}$. Both densities are defined with respect to a common dominating measure $\lambda$. 
When $f_{0}$ is unknown, the agent needs a small number of pre-change observations to estimate $f_{0}$. Thus, following \citep{lai2010sequential}, we assume that the change-point does not occur within the first $m$ time-steps, i.e., $\nu > m$. These first $m$ time-steps, known as the \emph{pre-change window}, are necessary for change detectors to learn the pre-change density.

Let $\tau$ denote the stopping time of a (causal) change detector, i.e., the time-step at which a change detector declares a change. Following notation common in the QCD literature, we use $\Pr_{\nu}$ and $\E_{\nu}$ to denote the probability measure and the expectation when the change occurs at $\nu$. Similarly, $\Pr_{\infty}$ and $\E_{\infty}$ represent the probability measure and the expectation when there is no change in the distribution. As discussed in Section \ref{sec:Intro}, our metric for detection delay is the smallest value  such that this value is exceeded by the detection delay $\tau - \nu$ with low probability, for any possible change-point. Therefore, we define the latency as follows: for some (small) latency level $\delta_{\mrm{D}} \in (0,1)$,
\begin{align} \label{eq:latency}
    \ell \coloneqq \min \lbp d \in [T]: \Pr_{\nu} \lp \tau \geq \nu + d \rp \leq \delta_{\mathrm{D}}, \forall\, \nu \in \lb m+1, T-d \rb  \rbp
\end{align}
where $m = 0$ when the pre-change density $f_{0}$ is known. Then, the goal of a change detector is to minimize the latency, subject to the constraint that the false alarm probability over a finite horizon $\Pr_{\infty} \lp \tau \leq T \rp$ remains below a (small) false alarm level $\delta_{\mrm{F}} \in (0,1)$. Therefore, we propose the following formulation of the QCD problem of interest: for false alarm and latency levels $\delta_{\mrm{F}}, \delta_{\mrm{D}} \in \lp 0, 1 \rp$,
\begin{equation}
    \underset{\tau}{\textrm{minimize}} \enspace \ell \quad\quad \textrm{s.t.}\enspace\Pr_{\infty}\lp\tau\leq T\rp\leq\delta_{\mathrm{F}}
\label{eq:QCD}
\end{equation}
where $m = 0$ when the pre-change density $f_{0}$ is known. Note that when we develop candidate change detector for this problem, we do \emph{not} leverage the knowledge of the horizon $T$, as this information about the horizon might not be available in general settings.
%

%
%

\section{Universal Lower Bound on Latency} \label{sec:low_bound}
In order to design good change detectors for the problem introduced in \eqref{eq:QCD}, we first establish the theoretical lower bound on the latency that any change detector must satisfy, with the understanding that good change detectors should approach this lower bound in terms of their latency for prespecified false alarm and latency levels $(\delta_{\mrm{F}}, \delta_{\mrm{D}})$. 
%
The following quantity appears in the lower bound:
\begin{align}
K\coloneqq\log\lp\E_{f_{1}}\lb\frac{f_{1}\lp X\rp}{f_{0}\lp X\rp}\rb\rp \label{eq:C_const}   
\end{align}
where $\E_{f_{1}}$ denotes the expectation with $X \sim f_{1}$. Using Jensen's inequality, it is easily seen that $K>D(f_{1}||f_{0})>0$ where $D$ denotes Kullback-Liebler (KL) divergence.  We further assume that $K<\infty$.

\begin{theorem}[Asymptotic Lower Bound on Latency] \label{thm:low_bound_latency} For all $\delta_{\mathrm{F}}, \delta_{\mathrm{D}} \in \lp 0, 1 \rp$ such that $\delta_{\mathrm{F}} + \delta_{\mathrm{D}}<1$
\begin{equation}
\ell \geq \lp\frac{1}{K} + o\lp 1\rp\rp \lb\log\lp T\rp + \log\lp \frac{1}{\delta_{\mathrm{F}}}\rp + \log\lp1 - \delta_{\mathrm{F}} - \delta_{\mathrm{D}}\rp + o\lp 1\rp\rb\label{eq:d_lower}
\end{equation}
as $T \to \infty$.
\end{theorem}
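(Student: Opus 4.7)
The plan is to apply a change-of-measure argument against a family of non-overlapping change-point alternatives $\{\Pr_{\nu_{k}}\}$, combined with a Cauchy--Schwarz bound that exploits the identity $e^{K} = \E_{f_{0}}[(f_{1}/f_{0})^{2}]$ (equivalently, that $K$ is the log of the Rényi-$2$/chi-squared-type affinity between $f_{0}$ and $f_{1}$).

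First I would fix any candidate detector $\tau$ with latency $\ell = d$ and $\Pr_{\infty}(\tau \leq T) \leq \delta_{\mathrm{F}}$, and partition the admissible change-point range $[m+1,T-d]$ into $N := \lfloor (T-m)/d\rfloor$ non-overlapping length-$d$ windows with left endpoints $\nu_{k} := m + 1 + (k-1)d$, $k=1,\dots,N$. The events $A_{k} := \{\nu_{k} \leq \tau < \nu_{k} + d\}$ lie in $\mcal{F}_{\nu_{k}+d-1}$, are pairwise disjoint, and are contained in $\{\tau \leq T\}$, so $\sum_{k=1}^{N}\Pr_{\infty}(A_{k}) \leq \delta_{\mathrm{F}}$. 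The latency constraint gives $\Pr_{\nu_{k}}(\tau \geq \nu_{k} + d) \leq \delta_{\mathrm{D}}$; since $\{\tau < \nu_{k}\}$ depends only on pre-change samples, $\Pr_{\nu_{k}}(\tau < \nu_{k}) = \Pr_{\infty}(\tau < \nu_{k}) \leq \delta_{\mathrm{F}}$, and hence $\Pr_{\nu_{k}}(A_{k}) \geq 1 - \delta_{\mathrm{F}} - \delta_{\mathrm{D}}$ for every $k$.

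Next, I would change measure on each window. Writing $L_{k} := \prod_{i=\nu_{k}}^{\nu_{k}+d-1} f_{1}(X_{i})/f_{0}(X_{i})$ for the Radon--Nikodym derivative of $\Pr_{\nu_{k}}$ with respect to $\Pr_{\infty}$ on $\mcal{F}_{\nu_{k}+d-1}$, one has $\Pr_{\nu_{k}}(A_{k}) = \E_{\infty}[L_{k}\mathbf{1}_{A_{k}}]$, and Cauchy--Schwarz gives $\Pr_{\nu_{k}}(A_{k})^{2} \leq \E_{\infty}[L_{k}^{2}]\cdot\Pr_{\infty}(A_{k})$. Under $\Pr_{\infty}$ the samples are i.i.d.\ $f_{0}$, so $\E_{\infty}[L_{k}^{2}] = (\E_{f_{0}}[(f_{1}/f_{0})^{2}])^{d} = e^{dK}$. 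Combining with the lower bound on $\Pr_{\nu_{k}}(A_{k})$ and summing over $k$ produces $N(1-\delta_{\mathrm{F}}-\delta_{\mathrm{D}})^{2}e^{-dK} \leq \delta_{\mathrm{F}}$, equivalently $dK \geq \log N + 2\log(1-\delta_{\mathrm{F}}-\delta_{\mathrm{D}}) + \log(1/\delta_{\mathrm{F}})$.

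The last step will be an asymptotic cleanup. Without loss of generality $d = o(T)$ (else the claimed bound holds trivially), so $\log N = \log((T-m)/d) + O(1) = \log T\,(1+o(1))$ as $T \to \infty$, and the previous display rearranges to $\ell \geq (1/K + o(1))[\log T + \log(1/\delta_{\mathrm{F}}) + \log(1-\delta_{\mathrm{F}}-\delta_{\mathrm{D}}) + o(1)]$, with the benign factor of two on $\log(1-\delta_{\mathrm{F}}-\delta_{\mathrm{D}})$ absorbed into the outer $(1/K + o(1))$ and the inner additive $o(1)$. The hard part will be this asymptotic bookkeeping, because $\log N$ couples back to $d$ via $N = \lfloor (T-m)/d\rfloor$ and so the inequality must essentially be solved for $d$; a secondary conceptual point is that the use of Cauchy--Schwarz (rather than a Chernoff or first-moment argument on $\log L_{k}$) is precisely what makes the Rényi-$2$ exponent $K$ appear and requires no moment or concentration assumptions on $\log(f_{1}/f_{0})$ beyond the standing hypothesis $K < \infty$.
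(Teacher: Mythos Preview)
Your proposal is correct and takes a genuinely different route from the paper's proof. The paper also partitions the horizon into $\lfloor T/\ell\rfloor$ non-overlapping windows, but instead of summing over all of them it uses a pigeon-hole argument to extract a \emph{single} change-point $\tilde\nu$ with $\Pr_{\infty}(\tilde\nu\le\tau<\tilde\nu+\ell)\le\delta_{\mathrm{F}}/\lfloor T/\ell\rfloor$; it then introduces a free truncation level $c>K$, splits $\{\tilde\nu\le\tau<\tilde\nu+\ell\}$ according to whether the window log-likelihood exceeds $\ell c$, bounds the ``small'' piece by change of measure (factor $e^{\ell c}$) and the ``large'' piece by Markov's inequality on $\prod f_{1}/f_{0}$ (which is where $K$ enters), and finally sends $c\downarrow K$. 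Your single Cauchy--Schwarz step replaces all of this: the identity $\E_{\infty}[L_{k}^{2}]=(\E_{f_{0}}[(f_{1}/f_{0})^{2}])^{d}=e^{dK}$ produces the exponent $K$ directly, with no auxiliary parameter to optimize and no separate Markov bound. The price is the squared factor $(1-\delta_{\mathrm{F}}-\delta_{\mathrm{D}})^{2}$, i.e.\ a coefficient $2$ rather than $1$ on $\log(1-\delta_{\mathrm{F}}-\delta_{\mathrm{D}})$; your absorption of this $O(1)$ discrepancy into the outer $o(1)$ factor (which multiplies $\log T$) is legitimate, though the paper's approach would yield the sharper constant if one cared about a non-asymptotic statement. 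Both proofs face the same circular bookkeeping in which $\log N$ (or the paper's $\log\ell$ inside $\tilde c$) couples back to $d$; the paper resolves it by invoking the upper bound $\ell=\mathcal{O}(\log T)$ from its Theorem on the TVT-CuSum test, whereas your ``without loss $d=o(T)$'' handles it self-containedly.
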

The proof of Theorem \ref{thm:low_bound_latency} is given in Appendix \ref{sec:thm1}. 

\section{Candidate Change Detectors}
\label{sec:CD}

Given the universal lower bound of Theorem \ref{thm:low_bound_latency}, our goal in this section is to design change detectors that approach this bound asymptotically as $T \to \infty$. We first consider the scenario with full knowledge of the underlying distributions, and then generalize to the scenario where the
underlying distributions are unknown, except for being sub-Gaussian with different
means.


\subsection{Known Pre- and Post-Change Distributions}
\label{sec:pre-post-known}

In the setting where $T=\infty$ and the independent stochastic observations follow \eqref{eq:sample_distr}, it was demonstrated by Moustakides \citep{moustakides1986optimal} that Page's Cumulative Sum (CuSum) test \citep{page1954continuous} is exactly optimal under Lorden's criterion \citep{lorden1971procedures}. The CuSum test employs the CuSum statistic:
\begin{equation}\label{eq:CuSum_stat}
    C_{n} \coloneqq \sup_{k \in \lb n\rb} \sum_{i = k}^{n} \log \lp \frac{f_{1} \lp X_{i}\rp}{f_{0} \lp X_{i}\rp} \rp,\; n \in \mbb{N}.
\end{equation}
which satisfies the recursion:
\begin{equation}\label{eq:CuSum_recur}
    C_{n} = \max\lbp C_{n-1}, 0\rbp + \log \lp \frac{f_{1} \lp X_{n}\rp}{f_{0} \lp X_{n}\rp} \rp,\; n \in \mbb{N},
\end{equation}
with $C_{0} \coloneqq 0$. The CuSum test declares a change when the CuSum statistic exceeds a constant threshold. Thus, the stopping time of the CuSum test is defined as follows:
\begin{equation}\label{eq:CuSum_tau}
    \tau_{\mrm{CS}} \coloneqq \inf\lbp n \in \mbb{N}: C_{n} \geq b \rbp,
\end{equation}
where $b$ is the threshold. In the same setting where $T = \infty$ and observations follow \eqref{eq:sample_distr}, the Shiryaev–Roberts (SR) test  \citep{shiryaev1961problem} is shown to be asymptotically optimal under  Pollak's criterion \citep{pollak1985optimal} and under Lorden's criterion. The SR test utilizes the statistic:
\begin{equation}\label{eq:SR_stat}
    S_{n} \coloneqq \sum_{k \in \lb n\rb} \prod_{i = k}^{n} \frac{f_{1} \lp X_{i}\rp}{f_{0} \lp X_{i}\rp},\; n \in \mbb{N},
\end{equation}
which satisfies the recursion:
\begin{align}\label{eq:SR_recur}
    S_{n} = \lp S_{n-1} + 1\rp \frac{f_{1} \lp X_{n}\rp}{f_{0} \lp X_{n}\rp},\; n \in \mbb{N},
\end{align}
where $S_{0} \coloneqq 0$. Similar to the CuSum test, the SR test gets triggered when the SR statistic surpasses a constant threshold. The stopping time of the SR test can then be defined as follows:
\begin{equation}\label{eq:SR_tau}
    \tau_{\mrm{SR}} \coloneqq \inf\lbp n \in \mbb{N}: S_{n} \geq b \rbp.
\end{equation}
Due to the recursive properties in \eqref{eq:CuSum_recur} and \eqref{eq:SR_recur}, the CuSum and SR tests are computationally efficient, as the computational complexity of the statistics per time-step is $\mcal{O}(1)$.
Owing to the aforementioned optimality of the CuSum and SR tests, it is natural to speculate that these tests can attain the lower bound in Theorem \ref{thm:low_bound_latency} asymptotically as $T \to \infty$. 
However, since the mean time to false alarm is finite for the CuSum and SR tests (i.e., $\E_{\infty} \lb \tau_{\mrm{CS}} \rb, \E_{\infty} \lb \tau_{\mrm{SR}} \rb < \infty$), the false alarm probability of these tests goes to $1$ as the horizon goes to infinity (i.e., $\lim_{T \to \infty} \Pr_{\infty}\lp \tau_{\mrm{CS}} \leq T\rp = \lim_{T \to \infty} \Pr_{\infty}\lp \tau_{\mrm{SR}} \leq T\rp = 1$). This indicates that these tests (with a constant threshold) cannot suppress the false alarm probability over a large horizon if the threshold cannot be tuned according to the horizon. As the agent is assumed to be oblivious to the horizon, the CuSum test in \eqref{eq:CuSum_tau} and the SR test in \eqref{eq:SR_tau} are not good candidate solutions. 

One way to control the false alarm rate in the CuSum and SR tests for large horizons is to increase the thresholds of these tests with time. Taking cues from the analysis in  \citep{kaufmann2021mixture, besson2022efficient}, we propose the Time-Varying-Threshold CuSum (TVT-CuSum) and Time-Varying-Threshold SR (TVT-SR) tests, whose thresholds increase logarithmically with time. The stopping time of the TVT-CuSum test is defined as follows: for $r > 1$,
\begin{align}\label{eq:TVT-CuSum_tau}
    \tau_{\mrm{C}, r} \coloneqq \inf \lbp n \in \mbb{N}: C_{n} \geq \beta_{\mrm{C}} \lp n, \delta_{\mrm{F}}, r \rp \rbp,
\end{align}
where the time-varying threshold $\beta_{\mrm{C}}$ is:
\begin{align}\label{eq:TVT-CuSum_thres}
    \beta_{\mrm{C}} \lp n, \delta_{\mrm{F}}, r \rp \coloneqq \log\lp \zeta \lp r\rp \frac{n^{r}}{\delta_{\mrm{F}}} \rp,\; n \in \mbb{N},
\end{align}
with $\zeta \lp r \rp = \sum_{i = 1}^{\infty} i^{-r}$. Similarly, the stopping time of the TVT-SR test is defined as follows: for $r > 1$,
\begin{align}\label{eq:TVT-SR_tau}
    \tau_{\mrm{S}, r} \coloneqq \inf \lbp n \in \mbb{N}: \log S_{n} \geq \beta_{\mrm{S}} \lp n, \delta_{\mrm{F}}, r \rp \rbp,
\end{align}
where the time-varying threshold $\beta_{\mrm{S}}$ has the following form:
\begin{align}\label{eq:TVT-SR_thres}
    \beta_{\mrm{S}} \lp n, \delta_{\mrm{F}}, r \rp \coloneqq \beta_{\mrm{C}} \lp n, \delta_{\mrm{F}}, r \rp + \log n,\; n \in \mbb{N}.
\end{align}
In the following theorem, we establish that the TVT-CuSum and TVT-SR tests control the false alarm probability to be below the false alarm level $\delta_{\mrm{F}}$. Furthermore, the latencies of these tests at latency level $\delta_{\mrm{D}}$ are $\mcal{O}\lp \log T \rp$, matching the order in the lower bound in Theorem \ref{thm:low_bound_latency}. In the theorem, we use $\Lambda$ to denote the cumulant generating function of $\log \frac{f_{0} \lp X \rp}{f_{1} \lp X \rp}$ with $X \sim f_{1}$, i.e.,
%
\begin{equation} \label{eq:cumulant}
\Lambda\lp\theta\rp=\log\lp\E_{f_{1}}\lb\exp\lp\theta\log\lp\frac{f_{0}\lp X\rp}{f_{1}\lp X\rp}\rp\rp\rb\rp.
\end{equation}
\begin{theorem} [TVT-CuSum and TVT-SR tests] \label{thm:latency_pre-post-known}
Consider the TVT-CuSum test in \eqref{eq:TVT-CuSum_tau} and the TVT-SR test in \eqref{eq:TVT-SR_tau}. The false alarm probability of these tests is smaller than $\delta_{\mrm{F}}$, i.e., for any $r>1$,
\begin{equation}\label{FA_pre-post-known}
    \Pr_{\infty} \lp \tau_{\mrm{C},r} \leq T \rp \leq \delta_{\mrm{F}} \enspace\mrm{and}\enspace \Pr_{\infty} \lp \tau_{\mrm{S},r} \leq T \rp \leq \delta_{\mrm{F}}.
\end{equation}
In addition, the latencies of these tests are upper bounded as follows:
\begin{align}\label{eq:latency_pre-post-known}
    \ell \leq \inf_{\theta\in\lp0,1\rp} \lbp \frac{1}{\lvert\Lambda\lp\theta\rp\rvert} \lb\log\lp\frac{1}{\delta_{\mathrm{D}}}\rp +\theta\beta\lp T, \delta_{\mrm{F}}, r\rp\rb\rbp,
\end{align}
with $\beta = \beta_{\mrm{C}}$ for the TVT-CuSum test and $\beta = \beta_{\mrm{S}}$ for the TVT-SR test.
\end{theorem}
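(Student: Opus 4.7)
My plan is to prove the two claims separately. The false-alarm bound \eqref{FA_pre-post-known} rests on the observation that, for each fixed $k$, the single-start likelihood-ratio process $L_{n}^{(k)} \coloneqq \prod_{i=k}^{n} f_{1}(X_{i})/f_{0}(X_{i})$ is a non-negative $\Pr_{\infty}$-martingale with $\E_{\infty}[L_{n}^{(k)}]=1$, and satisfies $\exp(C_{n})=\max_{k\leq n} L_{n}^{(k)}$ as well as $S_{n}=\sum_{k=1}^{n} L_{n}^{(k)}$. The latency bound \eqref{eq:latency_pre-post-known} rests on a Chernoff tail for the post-change random walk $\sum_{i=\nu}^{n}\log(f_{1}(X_{i})/f_{0}(X_{i}))$ under $\Pr_{\nu}$, which lower bounds both statistics once one restricts the $\max$/sum to $k=\nu$.

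For TVT-SR, I would combine the identity $\E_{\infty}[S_{n}]=n$ with Markov's inequality to get $\Pr_{\infty}(\log S_{n}\geq \beta_{\mrm{S}}(n,\delta_{\mrm{F}},r))\leq \delta_{\mrm{F}}/(\zeta(r)n^{r})$ and then union bound over $n\in[T]$, using $\sum_{n\geq 1} n^{-r}=\zeta(r)$. The extra $\log n$ in $\beta_{\mrm{S}}$ is what absorbs the linear growth of $\E_{\infty}[S_n]$. For TVT-CuSum the analogous route through $C_{n}\leq \log S_{n}$ only closes for $r>2$, so instead I would exploit that $\beta_{\mrm{C}}(\cdot,\delta_{\mrm{F}},r)$ is non-decreasing in $n$: the inclusion $\lbp\exists\, n\geq k:\, L_{n}^{(k)}\geq e^{\beta_{\mrm{C}}(n,\delta_{\mrm{F}},r)}\rbp \subseteq \lbp \sup_{n\geq k}L_{n}^{(k)}\geq e^{\beta_{\mrm{C}}(k,\delta_{\mrm{F}},r)}\rbp$ reduces the time-varying threshold to a constant one per $k$, to which Ville's (Doob's) maximal inequality attaches probability at most $\delta_{\mrm{F}}/(\zeta(r)k^{r})$. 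A union bound over $k\in[T]$ then delivers $\delta_{\mrm{F}}$.

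For the latency, taking $k=\nu$ in the definitions of $C_{n}$ and $S_{n}$ gives $C_{\nu+d-1}\geq \sum_{i=\nu}^{\nu+d-1}\log(f_{1}(X_{i})/f_{0}(X_{i}))$ and the same lower bound for $\log S_{\nu+d-1}$. Hence $\lbp\tau\geq\nu+d\rbp$ is contained in $\lbp\sum_{i=\nu}^{\nu+d-1}\log(f_{0}(X_{i})/f_{1}(X_{i})) > -\beta(\nu+d-1,\delta_{\mrm{F}},r)\rbp$, where $\beta$ is $\beta_{\mrm{C}}$ or $\beta_{\mrm{S}}$ as appropriate. Monotonicity of $\beta$ in its first argument, combined with $\nu+d-1<T$ from \eqref{eq:latency}, lets me replace $\beta(\nu+d-1,\cdot)$ by $\beta(T,\cdot)$. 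Applying a Chernoff bound at $\theta\in(0,1)$ to the i.i.d.\ sum of $d$ copies of $\log(f_{0}(X)/f_{1}(X))$ under $X\sim f_{1}$ gives the tail $\exp\lp -d\,\abs{\Lambda(\theta)} + \theta\,\beta(T,\delta_{\mrm{F}},r)\rp$, using that $\Lambda$ is convex with $\Lambda(0)=\Lambda(1)=0$ so that $\abs{\Lambda(\theta)}=-\Lambda(\theta)>0$ on $(0,1)$. Demanding this tail to be at most $\delta_{\mrm{D}}$, solving for $d$, and taking the infimum over $\theta$ produces \eqref{eq:latency_pre-post-known}.

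The main obstacle I expect is the TVT-CuSum false-alarm analysis: a direct union bound over $(k,n)$ pairs via Markov, or the shortcut $C_{n}\leq \log S_{n}$, each diverges in the regime $r\in(1,2]$ that the theorem targets. The key move is to use monotonicity of the time-varying threshold so that Ville's inequality at the constant threshold $e^{\beta_{\mrm{C}}(k,\delta_{\mrm{F}},r)}$ simultaneously controls all later $n$, yielding a summable bound over $k$. Everything else reduces to routine martingale and Chernoff manipulations.
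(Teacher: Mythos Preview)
Your proposal is correct. The latency arguments match the paper's exactly (restrict to $k=\nu$, Chernoff at $\theta\in(0,1)$, monotonicity of $\beta$ to pass from $\nu+d-1$ to $T$, then solve for $d$). Your TVT-CuSum false-alarm argument is also essentially the paper's: the paper unions over the start index $j$ and applies Ville's inequality as well, the only cosmetic difference being that it absorbs the time-varying part into the process, showing $\bigl\{(j+k)^{-r}\prod_{i=j}^{j+k}f_{1}(X_{i})/f_{0}(X_{i})\bigr\}_{k\geq 0}$ is a nonnegative supermartingale and invoking Ville at the constant level $\zeta(r)/\delta_{\mrm{F}}$, whereas you first use monotonicity of $\beta_{\mrm{C}}$ to freeze the threshold at $e^{\beta_{\mrm{C}}(k,\delta_{\mrm{F}},r)}$ and then apply Ville to the clean martingale $L_{n}^{(k)}$. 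Both routes land on the same per-start bound $\delta_{\mrm{F}}/(\zeta(r)k^{r})$.

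The one genuine methodological difference is the TVT-SR false-alarm bound. You argue directly: $\E_{\infty}[S_{n}]=n$, Markov gives $\Pr_{\infty}\bigl(S_{n}\geq e^{\beta_{\mrm{S}}(n,\delta_{\mrm{F}},r)}\bigr)\leq n\,e^{-\beta_{\mrm{S}}(n,\delta_{\mrm{F}},r)}=\delta_{\mrm{F}}/(\zeta(r)n^{r})$, and a union bound over $n$ closes. The paper instead reduces TVT-SR to TVT-CuSum via the averaging inequality $\tfrac{1}{n}\sum_{j=1}^{n}\prod_{i=j}^{n}\frac{f_{1}(X_{i})}{f_{0}(X_{i})}\leq \max_{j}\prod_{i=j}^{n}\frac{f_{1}(X_{i})}{f_{0}(X_{i})}$, which turns $\{\log S_{n}\geq \beta_{\mrm{S}}(n,\cdot)\}$ into $\{C_{n}\geq \beta_{\mrm{C}}(n,\cdot)\}$ and then recycles the already-proved CuSum bound. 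Your route is more self-contained and explains cleanly why the extra $\log n$ in $\beta_{\mrm{S}}$ is exactly what is needed; the paper's route avoids reproving anything and makes the relation $\tau_{\mrm{C},r}\leq \tau_{\mrm{S},r}$ (under $\Pr_{\infty}$) explicit. Either is fine.
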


The proof of Theorem \ref{thm:latency_pre-post-known} is presented in Appendix \ref{sec:thm2}. Note that $\beta_{\mrm{C}} \lp T, \delta_{\mrm{F}}, r\rp$ and $\beta_{\mrm{S}} \lp T, \delta_{\mrm{F}}, r\rp$ are $\mcal{O} \lp \log T \rp$, matching the order of the lower bound in Theorem \ref{thm:low_bound_latency}. Hence, Theorem \ref{thm:latency_pre-post-known} indicates that the TVT-CuSum and TVT-SR tests are order-optimal with respect to the horizon $T$. 
\begin{remark}[Decreasing false alarm and latency levels] In the setting where $\delta_{\mrm{F}} = \mcal{O} \lp T^{-p} \rp$ and $\delta_{\mrm{D}} = \mcal{O} \lp T^{-q} \rp$ for some $p,q>0$, the lower bound in Theorem \ref{thm:low_bound_latency} is $\Omega \lp \log T \rp$, and the upper bound on latencies in \eqref{eq:latency_pre-post-known} is also $\mcal{O} \lp \log T \rp$. This shows that the TVT-CuSum and TVT-SR tests are also order-optimal with respect to $T$ in this setting, which is relevant to learning in piecewise stationary environments -- see, also \citep{huang2025sequentialchangedetectionlearning}.
\end{remark}

\subsection{Unknown Pre- and Post-Change Distributions}
\label{sec:pre-post-unknown}

While the TVT-CuSum and TVT-SR tests described in Section~\ref{sec:pre-post-known}  are order-optimal with respect to $T$, the computation of the CuSum and SR statistics requires full knowledge of $f_{0}$ and $f_{1}$.
%
We now investigate the setting where $f_{0}$ and $f_{1}$ are both $\sigma^{2}$-sub-Gaussian, and that the agent knows the sub-Gaussianity parameter $\sigma^{2}$.
We stress that the assumption is not too restrictive, since bounded stochastic observations (which are commonly encountered in practice) are $\sigma^{2}$-sub-Gaussian, where the parameter $\sigma$ can be derived from the support of the observations.
Let $\mu_{0}$ and $\mu_{1}$ represent the pre- and post-change means, respectively. Then, we define the change-gap $\Delta$ as the absolute difference between the pre- and post-change means, i.e., $\Delta \coloneqq \lba \mu_{0} - \mu_{1} \rba$. We assume that the change-gap is nonzero i.e., $\Delta>0$.

To enable the TVT-CuSum and TVT-SR tests to operate without knowledge of the pre- and post-change distributions, we replace the CuSum and SR statistics with their generalized versions---a common generalization method for settings with unknown pre- and post-change distributions \citep{lai1998information, lai2010sequential}. 
In particular, we replace the CuSum statistic in \eqref{eq:TVT-CuSum_tau} with the Generalized Likelihood Ratio (GLR) statistic, defined as follows:
%
%
\begin{equation} \label{eq:GLR_stat_pre-post-unknown}
    G_{n} \coloneqq \sup_{k \in \lb n \rb} \log  \frac{\sup_{\mu_{0}' \in \mbb{R}} \sup_{\mu_{1}' \in \mbb{R}} \prod_{i = 1}^{k} f_{\mu_{0}'}\lp X_{i} \rp \prod_{i = k + 1}^{n} f_{\mu_{1}'}\lp X_{i} \rp}{\sup_{\mu\in\mbb{R}} \prod_{i = 1}^{n} f_{\mu} \lp X_{i} \rp},\; n \in \lb T \rb,
\end{equation}
where $f_{\mu}$ denotes the Gaussian density with mean $\mu$ and variance $\sigma^{2}$. The following lemma (whose proof is given in the Appendix) shows that the GLR statistic can be written in terms of empirical means of the stochastic observations:

\begin{lemma}\label{lem:GLR-kl-unknown-pre-post}
For any $n\in\mbb{N}$ and any $k\in\lb n\rb$, we have:
\begin{align}
    &\log\lp\frac{\sup_{\mu_{0}'\in\mbb{R}}\prod_{i=1}^{k}f_{\mu_{0}'}\lp X_{i}\rp\sup_{\mu_{1}'\in\mbb{R}}\prod_{i=k+1}^{n}f_{\mu_{1}'}\lp X_{i}\rp}{\sup_{\mu\in\mbb{R}}\prod_{i=1}^{n}f_{\mu}\lp X_{i}\rp}\rp\nonumber\\
    &=k D \lp\hat{\mu}_{1:k};\hat{\mu}_{1:n}\rp+\lp n-k\rp D \lp\hat{\mu}_{k+1:n};\hat{\mu}_{1:n}\rp\label{eq:lr-kl}
\end{align}
where $\hat{\mu}_{m:n}$ is the empirical mean of the stochastic observations $\lbp X_{m}, \dots, X_{n} \rbp$ and \textup{$ D \lp x;y\rp\coloneqq(x-y)^{2}/(2\sigma^{2})$} is the KL-divergence between two Gaussian distributions with common variance $\sigma^{2}$ and different means $x,y\in\mbb{R}$.
\end{lemma}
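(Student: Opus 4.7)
The plan is to exploit the fact that for a Gaussian location family with known variance $\sigma^{2}$, the MLE of the mean is simply the sample mean, and then use the standard variance-decomposition identity (parallel-axis theorem) to rearrange the resulting sums of squared deviations.

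First, I would write out each maximum explicitly. Since $f_{\mu}(x) = (2\pi\sigma^{2})^{-1/2}\exp\!\big(-(x-\mu)^{2}/(2\sigma^{2})\big)$, the inner suprema over $\mu_{0}', \mu_{1}', \mu \in \mbb{R}$ are attained at $\hat{\mu}_{1:k}$, $\hat{\mu}_{k+1:n}$, and $\hat{\mu}_{1:n}$ respectively. Substituting these MLEs, the $(2\pi\sigma^{2})^{-n/2}$ normalization constants in numerator and denominator cancel, and the log-ratio on the left-hand side of \eqref{eq:lr-kl} reduces to
\begin{equation*}
\frac{1}{2\sigma^{2}} \lb \sum_{i=1}^{n} (X_{i}-\hat{\mu}_{1:n})^{2} - \sum_{i=1}^{k}(X_{i}-\hat{\mu}_{1:k})^{2} - \sum_{i=k+1}^{n}(X_{i}-\hat{\mu}_{k+1:n})^{2} \rb.
\end{equation*}

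Next, I would split the first sum as $\sum_{i=1}^{k}(X_{i}-\hat{\mu}_{1:n})^{2} + \sum_{i=k+1}^{n}(X_{i}-\hat{\mu}_{1:n})^{2}$ and apply the standard identity (obtained by completing the square around the block mean and noting that the cross term vanishes): for any real $c$ and any index set of size $\ell$ with empirical mean $\bar{X}$,
\begin{equation*}
\sum (X_{i}-c)^{2} = \sum (X_{i}-\bar{X})^{2} + \ell (\bar{X}-c)^{2}.
\end{equation*}
Applying this to each of the two blocks (with $c=\hat{\mu}_{1:n}$ and $\bar{X}\in\{\hat{\mu}_{1:k},\hat{\mu}_{k+1:n}\}$) causes the three sums of squared deviations about the block means to cancel pairwise, leaving only
\begin{equation*}
\frac{k(\hat{\mu}_{1:k}-\hat{\mu}_{1:n})^{2}}{2\sigma^{2}} + \frac{(n-k)(\hat{\mu}_{k+1:n}-\hat{\mu}_{1:n})^{2}}{2\sigma^{2}},
\end{equation*}
which is exactly $k\,D(\hat{\mu}_{1:k};\hat{\mu}_{1:n}) + (n-k)\,D(\hat{\mu}_{k+1:n};\hat{\mu}_{1:n})$ by the definition of $D$ in the statement.

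There is no real obstacle here; the computation is essentially one-line once the MLEs are plugged in and the parallel-axis identity is invoked. The one item worth being careful about is bookkeeping: making sure the $(2\pi\sigma^{2})^{-n/2}$ factors truly cancel (the numerator factorizes as $(2\pi\sigma^{2})^{-k/2}(2\pi\sigma^{2})^{-(n-k)/2}$, matching the denominator), and that the cross terms $\sum (X_{i}-\bar{X})(\bar{X}-c)$ vanish in each block because $\bar{X}$ is the empirical mean over that block. Once these two points are verified, the identity \eqref{eq:lr-kl} follows immediately.
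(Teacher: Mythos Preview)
Your proposal is correct and follows essentially the same approach as the paper: both plug in the empirical means as the MLEs, cancel the normalizing constants, and reduce the log-ratio to a sum-of-squares expression that simplifies to the stated KL form. The only cosmetic difference is that you invoke the parallel-axis identity explicitly, whereas the paper expands each square and uses $n\hat{\mu}_{1:n}=k\hat{\mu}_{1:k}+(n-k)\hat{\mu}_{k+1:n}$ to recombine terms; the underlying algebra is identical.
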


Thus, for any $n \in [T]$,
\begin{align}
    G_{n} = \sup_{k\in[n]} k D \lp\hat{\mu}_{1:k};\hat{\mu}_{1:n}\rp+\lp n-k\rp D \lp\hat{\mu}_{k+1:n};\hat{\mu}_{1:n}\rp.
\end{align}
%
The corresponding GLR test has stopping time:
\begin{align}\label{eq:GLR_tau-pre-post-unknown}
    \tau_{\mrm{GLR}} \coloneqq \inf \lbp n \in \mbb{N}:\; G_{n} \geq \beta_{\mrm{GLR}} \lp n, \delta_{\mathrm{F}} \rp \rbp,
\end{align}
where the threshold function $\beta_{\mrm{GLR}}$ is defined as:  
\begin{align}\label{eq:GLR_thres-pre-post-unknown}
    \beta_{\mrm{GLR}} \lp n, \delta_{\mathrm{F}} \rp \coloneqq& 6\log\lp1+\log\lp n\rp\rp+\frac{5}{2}\log\lp\frac{4n^{3/2}}{\delta_{\mathrm{F}}}\rp + 11,\; n \in \lb T \rb.
\end{align}
Similarly, we can extend the SR statistic to the Generalized Shiryaev–Roberts (GSR) statistic to the setting where the pre- and post-change densities are unknown. The GSR statistic is defined as follows:
\begin{equation}\label{eq:GSR_stat_pre-post-unknown}
W_{n} \coloneqq \sum_{k = 1}^{n} \frac{\sup_{\mu_{0}' \in \mbb{R}} \sup_{\mu_{1}' \in \mbb{R}} \prod_{i = 1}^{k} f_{\mu_{0}'} \lp X_{i} \rp \prod_{i = k + 1}^{n} f_{\mu_{1}'}\lp X_{i}\rp}{\sup_{\mu \in \mbb{R}} \prod_{i = 1}^{n} f_{\mu} \lp X_{i} \rp},\; n \in \lb T \rb.
\end{equation}
By Lemma \ref{lem:GLR-kl-unknown-pre-post}, we can write
\begin{align}
    W_{n} = \sum_{k = 1}^{n} \exp\lp k D \lp\hat{\mu}_{1:k};\hat{\mu}_{1:n}\rp+\lp n-k\rp D \lp\hat{\mu}_{k+1:n};\hat{\mu}_{1:n}\rp \rp.
\end{align}
The stopping time of the GSR test is given by:
\begin{align}\label{eq:GSR_tau-pre-post-unknown}
    \tau_{\mrm{GSR}} \coloneqq \inf \lbp n \in \mbb{N}:\; \log W_{n} \geq \beta_{\mrm{GSR}} \lp n, \delta_{\mathrm{F}} \rp \rbp,
\end{align}
where
\begin{align}\label{eq:GSR_thres-pre-post-unknown}
    \beta_{\mrm{GSR}} \lp n, \delta_{\mathrm{F}} \rp \coloneqq& \beta_{\mrm{GLR}} \lp n, \delta_{\mathrm{F}} \rp + \log n,\; n \in \lb T \rb. 
\end{align}

In the following theorem, we show that the GLR and GSR tests can effectively control the false alarm probability at level $\delta_{\mathrm{F}}$. Furthermore, we provide upper bounds on the latencies of the tests at latency level $\delta_{\mathrm{D}}$.
\begin{theorem} [GLR and GSR tests] \label{thm:latency_pre-post-unknown}
Consider the GLR test in \eqref{eq:GLR_tau-pre-post-unknown} and the GSR test in \eqref{eq:GSR_tau-pre-post-unknown}. The false alarm probabilities of these tests are controlled at level $\delta_{\mrm{F}}$, i.e.,
\begin{equation}\label{FA_pre-post-unknown}
    \Pr_{\infty} \lp \tau_{\mrm{GLR}} \leq T \rp \leq \delta_{\mrm{F}} \enspace\mrm{and}\enspace \Pr_{\infty} \lp \tau_{\mrm{GSR}} \leq T \rp \leq \delta_{\mrm{F}}.
\end{equation}
In addition, with the pre-change window length
\begin{equation} \label{eq:m}
m \geq \frac{8\sigma^{2}}{\Delta^{2}}\beta \lp T,\delta_{\mrm{F}}\rp,
\end{equation}
the latencies of these tests at latency level $\delta_{\mathrm{D}}$ are upper bounded as follows:
\begin{equation} \label{eq:d}
  \ell\leq\lce\max\lbp\frac{8\sigma^{2}m\beta \lp T,\delta_{\mrm{F}}\rp}{\Delta^{2}m-8\sigma^{2}\beta \lp T,\delta_{\mrm{F}}\rp},\frac{\delta_{\mrm{F}}^{2/3}}{2^{16/15}\delta_{\mrm{D}}^{4/15}}-m\rbp\rce
\end{equation}
with $\beta = \beta_{\mrm{GLR}}$ in \eqref{eq:GLR_thres-pre-post-unknown}
for the GLR test and $\beta = \beta_{\mrm{GSR}}$ in \eqref{eq:GSR_thres-pre-post-unknown} for the GSR test.
\end{theorem}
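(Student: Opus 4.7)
I would prove the false-alarm and latency statements separately, and for both I would start by combining Lemma \ref{lem:GLR-kl-unknown-pre-post} with the standard within-vs-between decomposition of variance,
\begin{equation*}
    kD\lp\hat{\mu}_{1:k};\hat{\mu}_{1:n}\rp + \lp n-k\rp D\lp\hat{\mu}_{k+1:n};\hat{\mu}_{1:n}\rp = \frac{k\lp n-k\rp}{2n\sigma^{2}}\lp\hat{\mu}_{1:k}-\hat{\mu}_{k+1:n}\rp^{2},
\end{equation*}
which follows because $\hat{\mu}_{1:n}$ is the $\lp k,n-k\rp$-weighted average of $\hat{\mu}_{1:k}$ and $\hat{\mu}_{k+1:n}$. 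This identity turns the analysis of $G_{n}$, and of each summand in $W_{n}$, into the study of a two-sample standardized difference of $\sigma^{2}$-sub-Gaussian empirical means.

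For \eqref{FA_pre-post-unknown}, I would work under $\Pr_{\infty}$, where the observations are i.i.d.\ and $\sigma^{2}$-sub-Gaussian. By the identity above, $G_{n}$ is the squared supremum over $k\in\lb n\rb$ of the centered, $1$-sub-Gaussian statistic $\sqrt{k(n-k)/n}\,\lp\hat{\mu}_{1:k}-\hat{\mu}_{k+1:n}\rp/\sigma$. I would then invoke a time- and split-point-uniform boundary-crossing inequality of the mixture-martingale / peeling type used by \citep{kaufmann2021mixture,besson2022efficient}: the $6\log\lp 1+\log n\rp$ and $\tfrac{5}{2}\log\lp 4n^{3/2}/\delta_{\mrm{F}}\rp+11$ terms in $\beta_{\mrm{GLR}}$ are exactly the corrections arising from geometric peeling over $n$ and $k$ together with the sub-Gaussian tail constants. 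For the GSR test, the bound $\log W_{n}\leq\log n + G_{n}$ absorbs the sum over $k$ into an additive $\log n$, which accounts for the extra term in \eqref{eq:GSR_thres-pre-post-unknown}; applying the same uniform deviation bound then yields $\Pr_{\infty}\lp\tau_{\mrm{GSR}}\leq T\rp\leq\delta_{\mrm{F}}$.

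For \eqref{eq:d}, I would fix $\nu\in\lb m+1,T-d\rb$ and aim to show $\Pr_{\nu}\lp\tau\geq\nu+d\rp\leq\delta_{\mrm{D}}$. At time $n=\nu+d-1$ I would lower-bound $G_{n}$ by the single summand at the true split $k=\nu-1$, which by the identity equals $\frac{\lp\nu-1\rp d}{2n\sigma^{2}}\lp\hat{\mu}_{1:\nu-1}-\hat{\mu}_{\nu:n}\rp^{2}$. On the sub-Gaussian concentration event that $\lba\hat{\mu}_{1:\nu-1}-\mu_{0}\rba$ and $\lba\hat{\mu}_{\nu:n}-\mu_{1}\rba$ are both a small fraction of $\Delta$, the squared difference stays within a constant factor of $\Delta^{2}$; using $\nu-1\geq m$ and $n\leq m+d$, $G_{n}$ is then at least a constant multiple of $md\Delta^{2}/\lb 8\sigma^{2}\lp m+d\rp\rb$. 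Forcing this to exceed $\beta_{\mrm{GLR}}\lp T,\delta_{\mrm{F}}\rp$ and solving for $d$ yields the first argument of the maximum in \eqref{eq:d}; the positivity of its denominator is exactly the pre-change-window condition \eqref{eq:m}. The second argument of the maximum, with the exponents $\delta_{\mrm{F}}^{2/3}/\delta_{\mrm{D}}^{4/15}$, arises from optimising the allocation of tail-bound slack between the two empirical means so that the good event has probability at least $1-\delta_{\mrm{D}}$ while the statistic still clears the threshold. Since $\log W_{n}$ is at least any individual exponent in its defining sum, the identical chain of inequalities handles the GSR case with $\beta_{\mrm{GSR}}$ in place of $\beta_{\mrm{GLR}}$.

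The main obstacle is the time- and split-point-uniform deviation inequality for the scan statistic under $\Pr_{\infty}$: a naive union bound over $\lp n,k\rp\in\lb T\rb^{2}$ loses factors of order $T^{2}$, and recovering the tight $\beta_{\mrm{GLR}}$ with the explicit constants in \eqref{eq:GLR_thres-pre-post-unknown} requires the mixture-of-priors / peeling machinery, carefully tuned to the variance proxy $\sigma^{2}n/\lb k\lp n-k\rp\rb$. Once that inequality is in place, the latency argument reduces to comparatively routine sub-Gaussian concentration, the only subtle point being the bookkeeping that separates the failure event into the two regimes captured by the maximum in \eqref{eq:d}.
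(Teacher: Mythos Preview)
Your skeleton is right---reduce $G_n$ via Lemma~\ref{lem:GLR-kl-unknown-pre-post}, bound $\log W_n\le \log n+G_n$ for GSR, and for latency plug in $k=\nu-1$---but two steps are off relative to what the paper actually does.

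\textbf{False alarm.} You propose to attack the two-sample scan statistic $\tfrac{k(n-k)}{2n\sigma^2}(\hat\mu_{1:k}-\hat\mu_{k+1:n})^2$ directly with a time- and split-uniform inequality. The paper does \emph{not} do this: it first observes that the two-sample quantity equals $\inf_{\mu}\{kD(\hat\mu_{1:k};\mu)+(n-k)D(\hat\mu_{k+1:n};\mu)\}$ and then replaces the infimum by evaluation at the true mean $\mu_0$. This decouples the problem into two \emph{one}-sample pieces, to each of which the one-sample Kaufmann--Koolen time-uniform bound (their Lemma~\ref{lem:mix_martin}) applies. A union bound over the split point $k$ \emph{is} taken, but the $n^{3/2}$ factor inside $\beta_{\mrm{GLR}}$ (with $n\geq k$) turns that sum into $\sum_k k^{-3/2}<\infty$. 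Without the $\inf_\mu\to\mu_0$ decoupling, you would need an off-the-shelf two-sample scan inequality with exactly the constants in \eqref{eq:GLR_thres-pre-post-unknown}, which is not what \citep{kaufmann2021mixture,besson2022efficient} provide; this is the step your plan is missing.

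\textbf{Latency.} Your reading of the second argument of the maximum in \eqref{eq:d} is incorrect: it does not come from ``optimising the allocation of tail-bound slack between the two empirical means.'' The paper uses a \emph{single} two-sample sub-Gaussian bound (their Lemma~\ref{lem:sub_Gaussian_diff}) on $(\hat\mu_{1:\nu-1}-\hat\mu_{\nu:\nu+\ell-1})-(\mu_0-\mu_1)$, not two separate one-sample bounds. The logic is: with $\ell$ at least the first argument of the max and $m$ satisfying \eqref{eq:m}, the event $\{\text{statistic}<\beta(\nu+\ell-1)\}$ \emph{forces} the centered difference to be large (a short triangle-inequality argument shows the contrary is empty). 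The probability of that large-deviation event is then at most $\delta_{\mrm{F}}^{5/2}/[16(m+\ell)^{15/4}]$ by Lemma~\ref{lem:sub_Gaussian_diff}, and the second argument of the max is precisely the condition $m+\ell\ge \delta_{\mrm{F}}^{2/3}/(2^{16/15}\delta_{\mrm{D}}^{4/15})$ that makes this $\le\delta_{\mrm{D}}$. Also, your inequality $n\le m+d$ is backwards (with $\nu\ge m+1$ one has $n=\nu+d-1\ge m+d$); in the paper the lower bound $\beta(\nu+\ell-1)\ge\beta(m+\ell)$ is what feeds into the final tail bound.
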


The proof of Theorem \ref{thm:latency_pre-post-unknown} is given in Appendix \ref{sec:thm3}. Note that $\beta_{\mrm{GLR}} \lp T,\delta_{\mrm{F}}\rp$ and $\beta_{\mrm{GSR}} \lp T,\delta_{\mrm{F}}\rp$ are $\mcal{O} \lp \log T \rp$. In the following corollary, we show that the GLR and GSR tests are order-optimal in $T$ with 
an appropriate choice of $m$ when $\delta_{\mrm{F}}$ and $\delta_{\mrm{D}}$ are fixed.
\begin{corollary} \label{cor:m.d}
If the false alarm level $\delta_{\mrm{F}}$ and the latency level $\delta_{\mrm{D}}$ are fixed, the GLR and GSR tests have latencies (at latency level $\delta_{\mrm{D}}$) that are $\mcal{O} \lp \log T \rp$ if
\begin{equation}
    m = \lce \frac{16\sigma^{2}}{\Delta^{2}}\beta \lp T,\delta_{\mrm{F}}\rp\rce,
\end{equation}
where $\beta = \beta_{\mrm{GLR}}$ in \eqref{eq:GLR_thres-pre-post-unknown}
for the GLR test and $\beta = \beta_{\mrm{GSR}}$ in \eqref{eq:GSR_thres-pre-post-unknown} for the GSR test.
\end{corollary}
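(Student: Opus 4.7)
The plan is essentially an algebraic substitution into the bound from Theorem \ref{thm:latency_pre-post-unknown}, exploiting the fact that the coefficient $16$ in the prescribed choice of $m$ is exactly twice the minimum coefficient $8$ permitted by \eqref{eq:m}. First I would verify admissibility: since $m=\lceil 16\sigma^{2}\beta(T,\delta_{\mrm{F}})/\Delta^{2}\rceil\geq 8\sigma^{2}\beta(T,\delta_{\mrm{F}})/\Delta^{2}$, the hypothesis \eqref{eq:m} of Theorem \ref{thm:latency_pre-post-unknown} holds, so the bound \eqref{eq:d} applies for both the GLR and GSR tests with the corresponding $\beta$ function.

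Next I would examine the first argument of the maximum in \eqref{eq:d}. The doubled coefficient creates a factor-of-two slack in the denominator: by the choice of $m$,
\begin{equation}
\Delta^{2} m - 8\sigma^{2}\beta(T,\delta_{\mrm{F}}) \geq 16\sigma^{2}\beta(T,\delta_{\mrm{F}}) - 8\sigma^{2}\beta(T,\delta_{\mrm{F}}) = 8\sigma^{2}\beta(T,\delta_{\mrm{F}}).
\end{equation}
Substituting this lower bound on the denominator gives
\begin{equation}
\frac{8\sigma^{2} m\,\beta(T,\delta_{\mrm{F}})}{\Delta^{2} m - 8\sigma^{2}\beta(T,\delta_{\mrm{F}})} \leq \frac{8\sigma^{2} m\,\beta(T,\delta_{\mrm{F}})}{8\sigma^{2}\beta(T,\delta_{\mrm{F}})} = m = \mathcal{O}\lp \beta(T,\delta_{\mrm{F}})\rp.
\end{equation}
Since $\beta_{\mrm{GLR}}(T,\delta_{\mrm{F}})$ and $\beta_{\mrm{GSR}}(T,\delta_{\mrm{F}})$ are both $\mathcal{O}(\log T)$ (immediate from their definitions in \eqref{eq:GLR_thres-pre-post-unknown} and \eqref{eq:GSR_thres-pre-post-unknown}, with $\delta_{\mrm{F}}$ treated as a fixed constant), this term is $\mathcal{O}(\log T)$.

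For the second argument of the maximum, $\delta_{\mrm{F}}^{2/3}/(2^{16/15}\delta_{\mrm{D}}^{4/15})$ is a fixed constant under the corollary's assumption that $\delta_{\mrm{F}}$ and $\delta_{\mrm{D}}$ do not vary with $T$, while $m$ grows at rate $\Theta(\log T)\to\infty$. Hence for all sufficiently large $T$ this argument is negative and is dominated by the first argument. Combining the two cases and applying the outer ceiling, which adds at most one, yields $\ell \leq \mathcal{O}(\log T)$, completing the proof for both tests simultaneously.

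Honestly there is no serious obstacle here: the content of the corollary is that the prescribed $m$ simultaneously (i) satisfies the admissibility constraint \eqref{eq:m} and (ii) keeps the first term of the max in \eqref{eq:d} bounded by $m$ itself rather than blowing up due to a near-zero denominator. The only point that requires a sentence of care is noting that $\delta_{\mrm{F}},\delta_{\mrm{D}}$ being fixed is what lets us absorb the second argument of the max into a lower-order term; if these were allowed to scale with $T$, a separate argument (of the kind indicated in the remark after Theorem \ref{thm:latency_pre-post-known}) would be required.
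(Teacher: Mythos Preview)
Your proposal is correct and follows essentially the same approach as the paper: substitute the prescribed $m$ into the bound \eqref{eq:d}, use $\Delta^{2}m-8\sigma^{2}\beta(T,\delta_{\mrm{F}})\geq 8\sigma^{2}\beta(T,\delta_{\mrm{F}})$ to bound the first argument of the max by $m=\mathcal{O}(\log T)$, and observe that the second argument is eventually dominated. If anything, your write-up is slightly more careful than the paper's in explicitly justifying why the second argument of the max can be discarded.
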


\begin{proof}
With the choice of $m$ and fixed $\delta_{\mrm{F}},\delta_{\mrm{D}} \in (0,1)$, we have
\begin{align}
\ell &= \lce\max\lbp\frac{8\sigma^{2} m\beta\lp T,\delta_{\mrm{F}}\rp}{\Delta^{2}m-8\sigma^{2}\beta\lp T,\delta_{\mrm{F}}\rp}, \frac{\delta_{\mrm{F}}^{2/3}}{2^{16/15}\delta_{\mrm{D}}^{4/15}} - m\rbp\rce \nonumber\\
&\leq \lce\frac{8\sigma^{2} m\beta\lp T,\delta_{\mrm{F}}\rp}{\Delta^{2} \lce \frac{16\sigma^{2}}{\Delta^{2}}\beta \lp T,\delta_{\mrm{F}}\rp \rce - 8\sigma^{2}\beta\lp T,\delta_{\mrm{F}}\rp}\rce \nonumber\\
&\overset{(a)}{\leq} \lce \frac{16\sigma^{2}}{\Delta^{2}}\beta \lp T,\delta_{\mrm{F}}\rp \rce \nonumber \\
&\overset{(b)}{=} \mcal{O} \lp \log T \rp,
\label{eq:d_upp_glr}
\end{align}
where step $(a)$ follows from the choice of $m$, and step $(b)$ stems from the fact that $\beta \lp T, \delta_{\mrm{F}} \rp$ is $\mcal{O} \lp \log T \rp$.
\end{proof}

\begin{remark}
The results in Theorem \ref{thm:latency_pre-post-unknown} and Corollary \ref{cor:m.d} can be generalized to the scenarios in which the stochastic observations are distributed as follows:
\begin{align} \label{eq:sample_distr_general}
    X_{n} \sim \begin{dcases}
        \mu_{0} + \eta_{n},& n \in \lb \nu-1 \rb\\
        \mu_{1} + \eta_{n},& n \in \lb \nu, T \rb
    \end{dcases}
\end{align}
where the $\sigma^{2}$-sub-Gaussian noises $\lbp \eta_{n}: n \in \lb T \rb \rbp$ are independent but not identically distributed.
\end{remark}

As Corollary \ref{cor:m.d} suggests, the GLR test in \eqref{eq:GLR_tau-pre-post-unknown} and the GSR test in \eqref{eq:GSR_tau-pre-post-unknown} are order-optimal with respect to $T$ when $m = \mcal{O}\lp \log T \rp$. Since these tests require minimal prior knowledge about the pre- and post-change distributions in operation, we can apply them to practical scenarios such as change detection in piecewise stationary bandits \citep{huang2025change, gerogiannis2025detectionneedfeasibleoptimal}.

\section{Experimental Results}
\label{sec:sim}

In this section, we illustrate the performance of our change detectors through simulations. We present the empirical latencies of the TVT-CuSum, TVT-SR, and GLR tests and compare them with the lower bound in Theorem \ref{thm:low_bound_latency}. We also compare the empirical values with the upper bounds on the latencies in Theorems \ref{thm:latency_pre-post-known} and \ref{thm:latency_pre-post-unknown}, to assess the tightness of the bounds. 

To study the empirical performance, we first explain how the empirical values are determined. From the definition of latency in \eqref{eq:latency}, when the change-point lies in $\lb m+1, T-\ell \rb$, the probability of detection delay surpassing the latency should be at most $\delta_{\mrm{D}}$. Accordingly, we compute the empirical latencies as follows: For each change-point in a candidate change-point set $\mcal{M} \subseteq \lb m+1, T-\ell \rb$, we conduct $2\times10^{5}$ trials and record the detection delay $\tau - \nu$ in each trial. The empirical value is then set to the maximum, over all change-points in $\mcal{M}$, of the $100 \lp 1-\delta_{\mrm{D}} \rp^{\mrm{th}}$ percentiles of the $2\times10^{5}$ recorded delays. Furthermore, we take $\mcal{M}$ to be $\lbp m+1+nT/10 :\; n \in \mbb{N},\; m+1+nT/10 \leq T \rbp$, because conducting $2\times10^5$ trials over all $T-\ell-m$ change-points is computationally cumbersome. 

We now describe the experimental setup: The pre-change distribution is $\mcal{N}\lp 0, 1 \rp$, and the post-change distribution is $\mcal{N}\lp 1, 1 \rp$. For the GLR test in \eqref{eq:GLR_tau-pre-post-unknown}, we set the pre-change window length $m = T - 1000$, which is larger than $\lce \frac{16\sigma^{2}}{\Delta^{2}}\beta_{\mrm{GLR}} \lp T,\delta_{\mrm{F}}\rp \rce$ in Corollary \ref{cor:m.d}, ensuring that the latency upper bound in Theorem \ref{thm:latency_pre-post-unknown} is $\mcal{O}\lp \log T \rp$. 
By exploiting the recursions in \eqref{eq:CuSum_recur} and \eqref{eq:SR_recur}, we can compute the CuSum and SR statistics efficiently and implement the TVT-CuSum and TVT-SR tests. 
However, the GLR statistic in  \eqref{eq:GLR_stat_pre-post-unknown} and the GSR statistic in \eqref{eq:GSR_stat_pre-post-unknown} do not have a recursive structure, making direct computation intractable. Therefore, we perform downsampling by taking the supremum over $\mcal{K}_{n} \coloneqq \lb n - 700, n \rb$ when computing the GLR statistic in \eqref{eq:GLR_stat_pre-post-unknown}, i.e., for $n \in \lb T \rb$,
\begin{align} 
    G'_{n} &\coloneqq \sup_{k \in \mcal{K}_{n}} \log \frac{\sup_{\mu_{0}' \in \mbb{R}} \sup_{\mu_{1}' \in \mbb{R}} \prod_{i = 1}^{k} f_{\mu_{0}'}\lp X_{i} \rp \prod_{i = k + 1}^{n} f_{\mu_{1}'}\lp X_{i} \rp}{\sup_{\mu\in\mbb{R}} \prod_{i = 1}^{n} f_{\mu} \lp X_{i} \rp} .\label{eq:GLR_stat_pre-post-unknown-down}
\end{align}
The stopping time of the implemented GLR test is then:
\begin{align}
\tau'_{\mrm{GLR}}&\coloneqq\inf\lbp n\in\mbb{N}:\; G'_{n} \geq\beta_{\mrm{GLR}}\lp n,\delta_{\mathrm{F}}\rp\rbp \label{eq:GLR_tau_pre-post-unknown-down},
\end{align}
where $\beta_{\mrm{GLR}}$ is defined in \eqref{eq:GLR_thres-pre-post-unknown}. Unfortunately, such downsampling cannot be applied to the computation of the GSR statistic in \eqref{eq:GSR_stat_pre-post-unknown}. Consequently, we report only the empirical performance of the GLR test in  \eqref{eq:GLR_tau_pre-post-unknown-down}. The empirical results appear in Figure~\ref{fig:latency_T}, where the empirical values of our tests increase linearly with $\log T$. 
%
\begin{figure}
    \centering
    \includegraphics[width=0.9\linewidth]{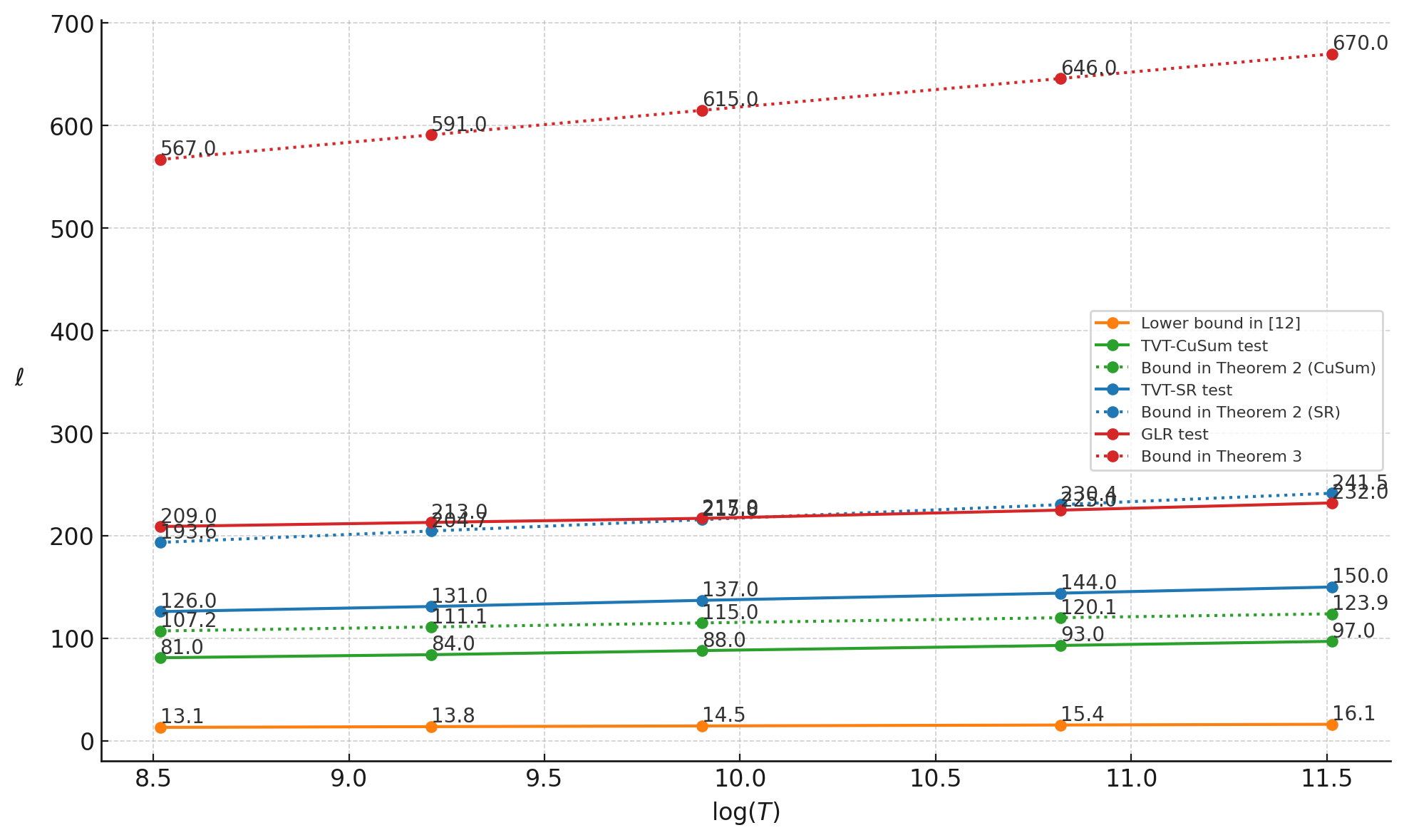}
    \caption{Latencies of TVT-CuSum, TVT-SR, and GLR tests with fixed $\delta_{\mrm{F}} = \delta_{\mrm{D}} = 0.01$ and varying $T \in \lbp 5000, 10000, 20000, 50000, 100000 \rbp$.}
    \label{fig:latency_T}
\end{figure}

As shown in Figure \ref{fig:latency_T}, the empirical latencies are linear with respect to $\log T$, corroborating the order-optimal performance of the tests. The gap between the upper bound in Theorem \ref{thm:latency_pre-post-unknown} and the empirical latencies of the GLR test is significantly larger than the gaps between the upper bounds in Theorem \ref{thm:latency_pre-post-known} and the empirical latencies of the TVT-CuSum and TVT-SR tests. This indicates that it may be possible to derive a tighter upper bound on the latency of the GLR test. In addition, the figure also indicates that the lower bound is also loose. There are two potential reasons for this looseness: (i) in deriving the lower bound, we did not impose the condition that test $\tau$ is oblivious to the knowledge of the horizon $T$; and (ii) the TVT-CuSum test could be improved considerably. We believe that the first reason is more likely to be true.

\section{Conclusions}
\label{sec:Sum}

In this work, we proposed a variant of the QCD problem, in which the goal is to minimize the latency (at a prespecified latency level), while controlling the false alarm probability. We established a lower bound on the latency, under prespecified false alarm and latency levels, demonstrating that the latency of an optimal change detector is at least $\Omega(\log T)$. We then developed two change detectors that leverage the full knowledge of the pre- and post-change distributions, and then generalized these tests to cases with minimal knowledge of the distributions. All the proposed tests attain $\mcal{O}\lp\log T\rp$ latency while successfully containing the false alarm probability. 
Our simulations validate the order optimality of the tests with respect to the horizon, and also show that our tests achieve good latency performance.
As the generalized tests can be applied to scenarios where little prior information is available about the pre- and post-change distributions, these tests can be employed in a wide range of practical settings. 

However, the simulation results also suggest that the lower bound on the latency in Theorem \ref{thm:low_bound_latency} and the upper bound on the latency in Theorem \ref{thm:latency_pre-post-unknown} are quite loose, and it may be worthwhile exploring if these bounds can be tightened. In addition, the gap between the latencies of the tests that utilize the knowledge of the distribution and that of the generalized versions of these tests indicates that it may be worthwhile attempting to design tests that improve upon the generalized tests. We leave these topics for future exploration.

\section*{Acknowledgement}
This work was supported by the National Science Foundation under grant ECCS-2033900, and by the Army Research Laboratory under Cooperative Agreement W911NF-17-2-0196, through the University of Illinois at Urbana-Champaign.

\bibliography{ref.bib}

\onecolumn

\appendix

In the appendix, we first present the proof of Theorem \ref{thm:latency_pre-post-known}, and then the proof of Theorem \ref{thm:low_bound_latency}. The reason for this order of presentation is because some steps in the proof of Theorem \ref{thm:low_bound_latency} require results that are in the proof of Theorem \ref{thm:latency_pre-post-known}. 

\section{Proof of Theorem \ref{thm:latency_pre-post-known}} \label{sec:thm2}

There are two change detectors to consider in Theorem \ref{thm:latency_pre-post-known}: the TVT-CuSum and TVT-SR tests. For each test, we need to establish that the false alarm probability, $\Pr_{\infty} \lp \tau \leq T \rp$, is upper-bounded by $\delta_{\mrm{F}}$, and late detection probability, $\Pr_{\nu} \lp \tau \geq \nu + \ell \rp$, is upper-bounded by $\delta_{\mrm{D}}$. For the TVT-CuSum test, Ville's inequality \citep{ville1939etude} leads to the false alarm probability upper bound, while the Chernoff bound  leads to the late detection probability upper bound. We then use the false alarm probability upper bound for the TVT-CuSum test to prove that of the TVT-SR test. 

First, consider the TVT-CuSum test with a fixed $T \in \mbb{N}$ and $\delta_{\mrm{F}}, \delta_{\mrm{D}} \in \lp 0, 1 \rp$. We can upper bound the probability of false alarm as follows: for all $T\in\mbb{N}$ and $r > 1$,
%
\begin{align}
&\Pr_{\infty}\lp\tau_{\mrm{C},r}\leq T\rp \nonumber\\
&\leq\Pr_{\infty}\lp\tau_{\mrm{C},r}<\infty\rp \nonumber\\
&=\Pr_{\infty}\lp\exists\,n\geq1:\;C_{n}\geq\beta_{\mrm{C}}\lp n,\delta_{\mathrm{F}},r\rp\rp \nonumber\\
&=\Pr_{\infty}\lp\exists\,n\geq1,\,\exists\,j\in\lb n\rb:\; \sum_{i=j}^{n}\log\lp\frac{f_{1}\lp X_{i}\rp}{f_{0}\lp X_{i}\rp}\rp\geq\log\lp\zeta\lp r\rp\frac{n^{r}}{\delta_{\mathrm{F}}}\rp\rp \nonumber\\
&=\Pr_{\infty}\lp\exists\,n\geq1,\,\exists\,j\in\lb n\rb:\; \prod_{i=j}^{n}\frac{f_{1}\lp X_{i}\rp}{f_{0}\lp X_{i}\rp}\geq\zeta\lp r\rp\frac{n^{r}}{\delta_{\mathrm{F}}}\rp\nonumber\\
&=\Pr_{\infty}\left(\exists\,j\geq1,\;\exists\,k\geq0:\;\prod_{i=j}^{j+k}\frac{f_{1}\lp X_{i}\rp}{f_{0}\lp X_{i}\rp}\geq\zeta\lp r\rp\frac{\lp j+k\rp^{r}}{\delta_{\mathrm{F}}}\right)\nonumber\\
&\overset{\aaaa}{\leq}\sum_{j=1}^{\infty} \Pr_{\infty}\left(\exists\,k\geq0:\;\frac{1}{\lp j+k\rp^{r}}\prod_{i=j}^{j+k}\frac{f_{1}\lp X_{i}\rp}{f_{0}\lp X_{i}\rp}\geq\frac{\zeta\lp r\rp}{\delta_{\mathrm{F}}}\right) \label{eq:TVT-CuSum_FA_1}
\end{align}
where step $(a)$ results from union bound. In the following lemma, we demonstrate that the sequence $\lbp\frac{1}{\lp j+k\rp^{r}}\prod_{i=j}^{j+k}\frac{f_{1}\lp X_{i}\rp}{f_{0}\lp X_{i}\rp}\rbp_{k=0}^{\infty}$ is a supermartingale, allowing us to use Ville's inequality \citep{ville1939etude}.

\begin{lemma}\label{lem:sup_martin} For any 
$r>1$ and $j \in \mbb{N}$, under the probability measure $\Pr_{\infty}$, i.e., $X_{k}\diid f_{0}$, the sequence 
\begin{align}
\lbp\frac{1}{\lp j+k\rp^{r}}\prod_{i=j}^{j+k}\frac{f_{1}\lp X_{i}\rp}{f_{0}\lp X_{i}\rp}\rbp_{k=0}^{\infty}  
\end{align}
%
is a nonnegative supermartingale with respect to $\lbp \mathcal{F}_{k} \rbp_{k=-1}^{\infty}$, where $\mathcal{F}_{-1}$ is the trivial sigma-algebra and $\mathcal{F}_{k} \coloneqq \sigma \lp X_{j}, \dots, X_{j+k} \rp$ for each $k \geq 0$.
\end{lemma}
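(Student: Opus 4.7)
The plan is to verify the three defining conditions of a nonnegative supermartingale for the process $M_{k}\coloneqq\lp j+k\rp^{-r}\prod_{i=j}^{j+k}\frac{f_{1}\lp X_{i}\rp}{f_{0}\lp X_{i}\rp}$, namely nonnegativity, integrability, and the supermartingale inequality $\E_{\infty}\lb M_{k}\mv\mcal{F}_{k-1}\rb\leq M_{k-1}$ for each $k\geq 1$, together with $\mcal{F}_{k}$-measurability of $M_{k}$. Nonnegativity is immediate from nonnegativity of the densities, and adaptedness is immediate from the definition $\mcal{F}_{k}=\sigma\lp X_{j},\dots,X_{j+k}\rp$.

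The heart of the argument is a one-line change-of-measure identity: under $\Pr_{\infty}$, since $X_{j+k}\diid f_{0}$,
\begin{equation*}
\E_{\infty}\lb\frac{f_{1}\lp X_{j+k}\rp}{f_{0}\lp X_{j+k}\rp}\rb=\int\frac{f_{1}\lp x\rp}{f_{0}\lp x\rp}f_{0}\lp x\rp\,d\lambda\lp x\rp=\int f_{1}\lp x\rp\,d\lambda\lp x\rp=1,
\end{equation*}
where I use absolute continuity of $f_{1}$ with respect to $f_{0}$ (which is implicit in the QCD setup, since the KL divergence appearing in Theorem \ref{thm:low_bound_latency} is finite). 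This in particular shows $\E_{\infty}\lb M_{k}\rb=\lp j+k\rp^{-r}<\infty$ by independence, settling integrability.

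For the supermartingale step, I would factor out the $\mcal{F}_{k-1}$-measurable terms:
\begin{equation*}
\E_{\infty}\lb M_{k}\mv\mcal{F}_{k-1}\rb=\frac{1}{\lp j+k\rp^{r}}\prod_{i=j}^{j+k-1}\frac{f_{1}\lp X_{i}\rp}{f_{0}\lp X_{i}\rp}\cdot\E_{\infty}\lb\frac{f_{1}\lp X_{j+k}\rp}{f_{0}\lp X_{j+k}\rp}\rb=\frac{1}{\lp j+k\rp^{r}}\prod_{i=j}^{j+k-1}\frac{f_{1}\lp X_{i}\rp}{f_{0}\lp X_{i}\rp},
\end{equation*}
using independence of $X_{j+k}$ from $\mcal{F}_{k-1}$ in the first equality and the identity above in the second. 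Comparing to $M_{k-1}=\lp j+k-1\rp^{-r}\prod_{i=j}^{j+k-1}\frac{f_{1}\lp X_{i}\rp}{f_{0}\lp X_{i}\rp}$ and invoking $\lp j+k\rp^{r}\geq\lp j+k-1\rp^{r}$ (which is where $r>1$ is not actually needed; $r>0$ would suffice for this step, but $r>1$ is required elsewhere in \eqref{eq:TVT-CuSum_FA_1} so that $\zeta\lp r\rp<\infty$) yields $\E_{\infty}\lb M_{k}\mv\mcal{F}_{k-1}\rb\leq M_{k-1}$.

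I do not anticipate any real obstacles: every step is either a definition or a direct computation. The only subtle point worth flagging in the write-up is the finiteness of $\zeta\lp r\rp$ and its role in the outer union-bound argument of \eqref{eq:TVT-CuSum_FA_1}, but for the lemma itself, the argument above is essentially complete.
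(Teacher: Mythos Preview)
Your proposal is correct and follows essentially the same route as the paper: factor out the $\mcal{F}_{k-1}$-measurable product, use independence to reduce the conditional expectation of the new likelihood ratio factor to $\int f_{1}\,d\lambda=1$, and then compare $\lp j+k\rp^{-r}\leq\lp j+k-1\rp^{-r}$. Your additional remarks on integrability, adaptedness, and the role of $r>1$ (needed only so that $\zeta\lp r\rp<\infty$ in the outer union bound, not for the supermartingale step itself) are accurate and slightly more thorough than the paper's write-up.
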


\begin{proof}
Recall that $\lambda$ is the dominating measure of $f_{0}$ and $f_{1}$. We can observe that for any $k\geq 0$:
\begin{align}\nonumber
&\E_{\infty}\lb\frac{1}{\lp j+k\rp^{r}}\prod_{i=j}^{j+k}\frac{f_{1}\lp X_{i}\rp}{f_{0}\lp X_{i}\rp}\bigg|\mcal{F}_{k-1}\rb\\\nonumber
&=\frac{1}{\lp j+k\rp^{r}}\prod_{i=j}^{j+k-1}\frac{f_{1}\lp X_{i}\rp}{f_{0}\lp X_{i}\rp}\E_{\infty}\lb\frac{f_{1}\lp X_{j+k}\rp}{f_{0}\lp X_{j+k}\rp}\bigg|\mcal{F}_{k-1}\rb\\\nonumber
&\overset{(a)}{=}\frac{1}{\lp j+k\rp^{r}}\prod_{i=j}^{j+k-1}\frac{f_{1}\lp X_{i}\rp}{f_{0}\lp X_{i}\rp}\E_{\infty}\lb\frac{f_{1}\lp X_{j+k}\rp}{f_{0}\lp X_{j+k}\rp}\rb\\\nonumber
&=\frac{1}{\lp j+k\rp^{r}}\prod_{i=j}^{j+k-1}\frac{f_{1}\lp X_{i}\rp}{f_{0}\lp X_{i}\rp}\int_{\mbb{R}}\frac{f_{1}\lp x\rp}{f_{0}\lp x\rp}f_{0}\lp x\rp d\lambda\lp x\rp\\\nonumber
&=\frac{1}{\lp j+k\rp^{r}}\prod_{i=j}^{j+k-1}\frac{f_{1}\lp X_{i}\rp}{f_{0}\lp X_{i}\rp}\\
&\leq\frac{1}{\lp j+k-1\rp^{r}}\prod_{i=j}^{j+k-1}\frac{f_{1}\lp X_{i}\rp}{f_{0}\lp X_{i}\rp}\label{eq:sup_mart}
\end{align}
%
where step $(a)$ is due to the independence of $\lbp X_{n}\rbp_{n=1}^{\infty}$. Since each $\frac{1}{\lp j+k\rp^{r}}\prod_{i=j}^{j+k}\frac{f_{1}\lp X_{i}\rp}{f_{0}\lp X_{i}\rp}$ is nonnegative, Lemma \ref{lem:sup_martin} follows from \eqref{eq:sup_mart}.
\end{proof}

With Lemma \ref{lem:sup_martin}, we can apply Ville's inequality to \eqref{eq:TVT-CuSum_FA_1} and obtain:
\begin{equation}\label{eq:TVT-CuSum_FA_2}
\begin{aligned}
\Pr_{\infty}\lp\tau_{\mrm{C},r}\leq T\rp
\leq\sum_{j=1}^{\infty}\frac{\delta_{\mathrm{F}}}{\zeta\lp r\rp}\E_{\infty}\lb\frac{1}{j^{r}}\frac{f_{1}\lp X_{j}\rp}{f_{0}\lp X_{j}\rp}\rb =\frac{\delta_{\mathrm{F}}}{\zeta\lp r\rp}\sum_{j=1}^{\infty}\frac{1}{j^{r}} =\delta_{\mathrm{F}}.
\end{aligned}
\end{equation}
This completes the proof of the false alarm probability upper bound for the TVT-CuSum test.

Next, we upper bound the late detection probability of the TVT-CuSum test: It is easy to show that $\Lambda\lp 0 \rp=\Lambda\lp1\rp=0$.
%
%
Since $f_{1}\neq f_{0}$,  $\Lambda$ is strictly convex, and therefore, $\Lambda\lp\theta\rp<0$ for any $\theta\in\lp0,1\rp$. Then, by the definition of $\tau_{\mrm{C}, r}$ in \eqref{eq:TVT-CuSum_tau}, for any $\nu \in \lb T-\ell\rb$,
\begin{align}
    \Pr_{\nu}\lp\tau_{\mrm{C}, r}  \geq \nu+\ell\rp 
    &=\Pr_{\nu}\lp\inf\lbp n\in\mbb{N}:\; C_{n}\geq\beta_{\mrm{C}}\lp n,\delta_{\mathrm{F}},r\rp\rbp\geq\nu+\ell\rp\nonumber\\
&=\Pr_{\nu}\left(\forall\,n\in\lb\nu+\ell-1\rb:\;C_{n}<\beta_{\mrm{C}}\lp n,\delta_{\mathrm{F}},r\rp\right).\label{eq:TVT-CuSum_LD_1}
\end{align}
Let $\theta = \argmin_{\theta' \in \lp 0,1 \rp} \lbp\lb\log\lp1/\delta_{\mathrm{D}}\rp+\theta'\beta_{\mrm{C}}\lp T, \delta_{\mrm{F}}, r\rp\rb/\lvert\Lambda\lp\theta'\rp\rvert\rbp$. Then, we have
\begin{align}
&\Pr_{\nu}\lp\tau_{\mrm{C}, r} \geq \nu+\ell\rp \nonumber \\ 
&\overset{\aaaa}{\leq}\Pr_{\nu}\lp C_{\nu+\ell-1}<\beta_{\mrm{C}}\lp \nu+\ell-1,\delta_{\mathrm{F}},r\rp\rp \nonumber\\
&=\Pr_{\nu}\Bigg(\max_{j \in \lb\nu+\ell-1\rb}\sum_{i=j}^{\nu+\ell-1}\log\lp\frac{f_{1}\lp X_{i}\rp}{f_{0}\lp X_{i}\rp}\rp<\beta_{\mrm{C}}\lp \nu+\ell-1,\delta_{\mathrm{F}},r\rp\Bigg)\nonumber\\
&\overset{\bbbb}{\leq}\Pr_{\nu}\lp\sum_{i=\nu}^{\nu+\ell-1}\log\lp\frac{f_{1}\lp X_{i}\rp}{f_{0}\lp X_{i}\rp}\rp<\beta_{\mrm{C}}\lp \nu+\ell-1,\delta_{\mathrm{F}},r\rp\rp\nonumber\\
&=\Pr_{\nu}\lp-\sum_{i=\nu}^{\nu+\ell-1}\log\lp\frac{f_{1}\lp X_{i}\rp}{f_{0}\lp X_{i}\rp}\rp>-\beta_{\mrm{C}}\lp \nu+\ell-1,\delta_{\mathrm{F}},r\rp\rp\nonumber\\
&\overset{\cccc}{\leq}\exp\lp-\ell\lp-\frac{1}{\ell}\theta\log\lp\beta_{\mrm{C}}\lp \nu+\ell-1,\delta_{\mathrm{F}},r\rp\rp-\Lambda\lp\theta\rp\rp\rp\nonumber\\
&=\exp\lp\theta\beta_{\mrm{C}}\lp \nu+\ell-1,\delta_{\mathrm{F}},r\rp+ \ell\Lambda\lp\theta\rp\rp\nonumber \\
&\overset{\dddd}{\leq}\exp\lp\theta\beta_{\mrm{C}}\lp T,\delta_{\mathrm{F}},r\rp+ \ell\Lambda\lp\theta\rp\rp\nonumber \\
&\overset{\eeee}{=}\delta_{\mrm{D}}\label{eq:TVT-CuSum_LD_2}
\end{align}

where step $(a)$ is due to the fact that $\lbp \nu+\ell-1\rbp\subseteq\lb \nu+\ell-1\rb$, while step $(b)$ is owing to the fact that $\sum_{i=\nu}^{\nu+\ell-1}\log\lp\frac{f_{1}\lp X_{i}\rp}{f_{0}\lp X_{i}\rp}\rp\leq\max_{j \in \lb\nu+\ell-1\rb}\sum_{i=j}^{\nu+\ell-1}\log\lp\frac{f_{1}\lp X_{i}\rp}{f_{0}\lp X_{i}\rp}\rp$. Step $(c)$ stems from the Chernoff bound \citep{chernoff1952measure}, and step $(d)$ results from the fact that $\nu+\ell-1\leq T$. In step $(e)$, we plug in the definition of $\ell$ in \eqref{eq:latency_pre-post-known}. This completes the proof of the late detection probability upper bound for the TVT-CuSum test.

Now, consider the TVT-SR test with a fixed $T \in \mbb{N}$ and $\delta_{\mrm{F}}, \delta_{\mrm{D}} \in \lp 0, 1 \rp$. We can upper bound the probability of false alarm as follows: for all $T\in\mbb{N}$ and $r > 1$,
\begin{align}
&\Pr_{\infty}\lp\tau_{\mrm{S},r}\leq T\rp \nonumber\\
&=\Pr_{\infty}\lp\exists\,n\in\lb T \rb:\;\log S_{n}\geq\beta_{\mrm{S}}\lp n,\delta_{\mathrm{F}},r\rp\rp \nonumber\\
&=\Pr_{\infty}\lp \exists\,n\in\lb T \rb:\; \log\lp\sum_{j=1}^{n} \prod_{i=j}^{n} \frac{f_{1}\lp X_{i}\rp}{f_{0}\lp X_{i}\rp}\rp\geq\beta_{\mrm{C}}\lp n,\delta_{\mathrm{F}},r\rp + \log n\rp \nonumber\\
&=\Pr_{\infty}\lp \exists\,n\in\lb T \rb:\; \sum_{j=1}^{n}\prod_{i=j}^{n}\frac{f_{1}\lp X_{i}\rp}{f_{0}\lp X_{i}\rp}\geq n\exp\lp\beta_{\mrm{C}}\lp n,\delta_{\mathrm{F}},r\rp\rp\rp\nonumber\\
&=\Pr_{\infty}\lp \exists\,n\in\lb T \rb:\; \frac{1}{n}\sum_{j=1}^{n}\prod_{i=j}^{n}\frac{f_{1}\lp X_{i}\rp}{f_{0}\lp X_{i}\rp}\geq \exp\lp\beta_{\mrm{C}}\lp n,\delta_{\mathrm{F}},r\rp\rp\rp\nonumber\\
&\overset{\aaaa}{\leq}\Pr_{\infty}\lp \exists\,n\in\lb T \rb:\; \max_{j \in \lb n \rb} \prod_{i=j}^{n}\frac{f_{1}\lp X_{i}\rp}{f_{0}\lp X_{i}\rp}\geq \exp\lp\beta_{\mrm{C}}\lp n,\delta_{\mathrm{F}},r\rp\rp \rp\nonumber\\
&=\Pr_{\infty}\lp \exists\,n\in\lb T \rb:\; C_{n} \geq \beta_{\mrm{C}}\lp n,\delta_{\mathrm{F}},r\rp \rp\nonumber\\
&=\Pr_{\infty}\lp\tau_{\mrm{C},r}\leq T\rp\nonumber\\
&\overset{\bbbb}{\leq}\delta_{\mrm{F}} \label{eq:TVT-SR_FA_1}
\end{align}
where step $(a)$ results from the fact that $\frac{1}{n}\sum_{j=1}^{n}\prod_{i=j}^{n}\frac{f_{1}\lp X_{i}\rp}{f_{0}\lp X_{i}\rp} \leq \max_{j \in \lb n \rb} \prod_{i=j}^{n}\frac{f_{1}\lp X_{i}\rp}{f_{0}\lp X_{i}\rp}$, and step $(b)$ stems from \eqref{eq:TVT-CuSum_FA_2}. This completes the proof of the false alarm probability upper bound for the TVT-SR test.

Next, we upper bound the late detection probability of the TVT-SR test: Let $\theta = \argmin_{\theta' \in \lp 0,1 \rp} \lbp\lb\log\lp1/\delta_{\mathrm{D}}\rp+\theta'\beta_{\mrm{S}}\lp T, \delta_{\mrm{F}}, r\rp\rb/\lvert\Lambda\lp\theta'\rp\rvert\rbp$. By the definition of $\tau_{\mrm{S}, r}$ in \eqref{eq:TVT-SR_tau}, for any $\nu \in \lb T-\ell\rb$,
\begin{align}
    &\Pr_{\nu}\lp\tau_{\mrm{S}, r} \geq \nu+\ell\rp \nonumber\\
    &=\Pr_{\nu}\lp\inf\lbp n\in\mbb{N}:\; \log S_{n}\geq\beta_{\mrm{S}}\lp n,\delta_{\mathrm{F}},r\rp\rbp\geq\nu+\ell\rp \nonumber\\
    &=\Pr_{\nu}\lp\forall\,n\in\lb\nu+\ell-1\rb:\;\sum_{j=1}^{n} \prod_{i=j}^{n}\frac{f_{1}\lp X_{i}\rp}{f_{0}\lp X_{i}\rp}<\exp\lp\beta_{\mrm{S}}\lp n,\delta_{\mathrm{F}},r\rp\rp\rp\nonumber\\
    &\overset{\aaaa}{\leq} \Pr_{\nu}\lp\sum_{j=1}^{\nu+\ell-1} \prod_{i=j}^{\nu+\ell-1}\frac{f_{1}\lp X_{i}\rp}{f_{0}\lp X_{i}\rp}<\exp\lp\beta_{\mrm{S}}\lp \nu+\ell-1,\delta_{\mathrm{F}},r\rp\rp\rp \nonumber\\
    &\overset{\bbbb}{\leq} \Pr_{\nu}\lp \prod_{i=\nu}^{\nu+\ell-1}\frac{f_{1}\lp X_{i}\rp}{f_{0}\lp X_{i}\rp}<\exp\lp\beta_{\mrm{S}}\lp \nu+\ell-1,\delta_{\mathrm{F}},r\rp\rp\rp \nonumber\\
    &=\Pr_{\nu}\lp-\sum_{i=\nu}^{\nu+\ell-1}\log\lp\frac{f_{1}\lp X_{i}\rp}{f_{0}\lp X_{i}\rp}\rp>-\beta_{\mrm{S}}\lp \nu+\ell-1,\delta_{\mathrm{F}},r\rp\rp \nonumber\\
    &\overset{\cccc}{\leq}\exp\lp-\ell\lp-\frac{1}{\ell}\theta\beta_{\mrm{S}}\lp \nu+\ell-1,\delta_{\mathrm{F}},r\rp-\Lambda\lp\theta\rp\rp\rp \nonumber\\
    &=\exp\lp \theta\beta_{\mrm{S}}\lp \nu+\ell-1,\delta_{\mathrm{F}},r\rp + \ell\Lambda\lp\theta\rp\rp \nonumber\\
    &\overset{\dddd}{\leq}\exp\lp \theta\beta_{\mrm{S}}\lp T,\delta_{\mathrm{F}},r\rp + \ell\Lambda\lp\theta\rp\rp \nonumber\\
    &\overset{\eeee}{=}\delta_{\mrm{D}} \label{eq:TVT-SR_LD_1}
\end{align}
where step $(a)$ is due to the fact that $\lbp \nu+\ell-1\rbp\subseteq\lb\nu+\ell-1\rb$, while step $(b)$ is owing to the fact that $\sum_{j=1}^{\nu+\ell-1} \prod_{i=j}^{\nu+\ell-1}\frac{f_{1}\lp X_{i}\rp}{f_{0}\lp X_{i}\rp} \leq \prod_{i=\nu}^{\nu+\ell-1}\frac{f_{1}\lp X_{i}\rp}{f_{0}\lp X_{i}\rp}$. Step $(c)$ stems from the Chernoff bound \citep{chernoff1952measure}, and step $(d)$ results from the fact that $\nu+\ell-1\leq T$. In step $(e)$, we plug in the definition of $\ell$ in \eqref{eq:latency_pre-post-known}. This completes the proof of the late detection probability upper bound for the TVT-SR test.

\section{Proof of Theorem \ref{thm:low_bound_latency}} \label{sec:thm1}

Consider a fixed arbitrary $T \in \mbb{N}$ and  $\delta_{\mrm{F}}, \delta_{\mrm{D}} \in \lp 0, 1 \rp$. Let $\tau$ and $\ell$ be an arbitrary stopping time and its latency satisfying $\Pr_{\infty}\lp \tau \leq T \rp \leq \delta_{\mrm{F}}$ and $\Pr_{\nu}\lp \tau \geq \nu+\ell \rp \leq \delta_{\mrm{D}}$ for any $\nu \in \lb 1, T-\ell\rb$. To prove Theorem \ref{thm:low_bound_latency}, we partition the event $\lbp \tau\geq\nu+\ell \rbp$ into three disjoint events, two of which are bounded by change of measure and one of which is bounded by Markov inequality. For any $c > K$, define the events
\begin{equation}
\mcal{A}\coloneqq\lbp\nu\leq\tau<\nu+\ell,\sum_{i=\nu}^{\nu+\ell-1}\!\!\log\lp\frac{f_{1}\lp X_{i}\rp}{f_{0}\lp X_{i}\rp}\rp\geq \ell c\rbp
\end{equation}
and 
\begin{equation}\label{eq:Bc}
\mcal{B}\coloneqq\lbp\nu\leq\tau<\nu+\ell,\sum_{i=\nu}^{\nu+\ell-1}\!\!\log\lp\frac{f_{1}\lp X_{i}\rp}{f_{0}\lp X_{i}\rp}\rp<\ell c\rbp.
\end{equation}
We note that $\mcal{A}\cap \mcal{B} = \emptyset$ and $\mcal{A}\cup \mcal{B}= \lbp \nu\leq \tau < \nu+\ell \rbp$, which we will use later in the proof. From the problem formulation \eqref{eq:QCD} we have that for any $\nu\in\lb T-\ell \rb$, 
\begin{align}
\delta_{\mathrm{D}}
&\geq\Pr_{\nu}\lp\tau\geq\nu+\ell\rp\nonumber\\
&=1-\Pr_{\nu}\lp\tau<\nu+\ell\rp \nonumber\\
&=1-\Pr_{\nu}\lp\tau<\nu\rp-\Pr_{\nu}\lp\nu\leq\tau<\nu+\ell\rp \nonumber\\
&=1-\Pr_{\infty}\lp\tau<\nu\rp-\Pr_{\nu}\lp\nu\leq\tau<\nu+\ell\rp \nonumber\\
&\geq1-\Pr_{\infty}\lp\tau\leq T\rp-\Pr_{\nu}\lp\mcal{B}\rp-\Pr_{\nu}\lp\mcal{A}\rp.\label{eq:thm_1_sketch_1}
\end{align}

Next, we derive an upper bound on $\Pr_{\nu}\lp\mcal{B}\rp$ by using change of measure and the following lemma:

\begin{lemma}\label{lem:LowBoundFA} For any stopping time $\tau$ that satisfies the false alarm probability constraint $\Pr_{\infty} \lp \tau \leq T \rp \leq \delta_{\mrm{F}}$, any $\delta_{\mathrm{F}}\in\lp0,1\rp$, and any $d\in\lb T\rb$, there exists a time-step $\tilnu\in\lb T-d\rb\cup\lbp 0\rbp$ such that:
\begin{align}
\Pr_{\infty}\lp\tilnu\leq\tau< \tilnu+d\rp\leq\frac{\delta_{\mathrm{F}}}{\lfl T/d\rfl}.\label{eq:FA_window}
\end{align}
\end{lemma}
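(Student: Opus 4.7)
The plan is a clean pigeonhole argument on a tiling of $\lb 0, T-1\rb$ by $\lfl T/d \rfl$ disjoint windows of length $d$. First, I would fix the candidate starting times $\tilnu_{k} \coloneqq \lp k-1 \rp d$ for $k \in \lb \lfl T/d \rfl \rb$, and verify the boundary relations $0 \leq \tilnu_{k} \leq \lp \lfl T/d \rfl - 1\rp d \leq T-d$, so that each $\tilnu_{k}$ lies in the admissible index set $\lb T-d\rb \cup \lbp 0 \rbp$.

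Next, I would note that the events $\mcal{E}_{k} \coloneqq \lbp \tilnu_{k} \leq \tau < \tilnu_{k} + d \rbp$, for $k = 1, \dots, \lfl T/d \rfl$, correspond to $\tau$ falling in the intervals $\lb \lp k-1\rp d,\, kd-1\rb$. These intervals are pairwise disjoint, and their union is contained in $\lb 0, \lfl T/d \rfl d - 1 \rb \subseteq \lb 0, T-1 \rb \subseteq \lbp \tau \leq T \rbp$. Hence the $\mcal{E}_{k}$ are pairwise disjoint, and by the false alarm constraint,
\begin{align}
\sum_{k=1}^{\lfl T/d \rfl} \Pr_{\infty}\lp \mcal{E}_{k} \rp = \Pr_{\infty}\lp \bigcup_{k=1}^{\lfl T/d \rfl} \mcal{E}_{k} \rp \leq \Pr_{\infty}\lp \tau \leq T \rp \leq \delta_{\mrm{F}}.
\end{align}

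Finally, by an averaging (pigeonhole) argument applied to this sum, there must exist at least one index $k^{*} \in \lb \lfl T/d \rfl \rb$ such that $\Pr_{\infty}\lp \mcal{E}_{k^{*}} \rp \leq \delta_{\mrm{F}} / \lfl T/d \rfl$, and setting $\tilnu \coloneqq \tilnu_{k^{*}}$ yields \eqref{eq:FA_window}. The proof is essentially obstacle-free; the only care-point is the index-range bookkeeping to ensure that all $\lfl T/d \rfl$ chosen starting times indeed fall in $\lb T-d \rb \cup \lbp 0 \rbp$ and that the length-$d$ windows they induce are genuinely disjoint subsets of $\lb 0, T-1 \rb$, which is exactly why the denominator is the floor $\lfl T/d \rfl$ rather than the ratio $T/d$.
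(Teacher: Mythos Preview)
Your proposal is correct and takes essentially the same approach as the paper: both tile $\lb 0, \lfl T/d \rfl d - 1\rb$ by the $\lfl T/d \rfl$ disjoint length-$d$ windows starting at $0, d, 2d, \ldots$ and use the false alarm constraint to bound the total mass, then extract one small window. The only cosmetic difference is that the paper phrases the pigeonhole step as a proof by contradiction, whereas you give the direct averaging version.
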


\begin{proof}
Suppose that there exists a $d \in \lb T\rb$ such that for any $\nu\in\lb T-d\rb$,
\begin{align}
\Pr_{\infty}\lp\nu\leq\tau<\nu+d\rp>\frac{\delta_{\mathrm{F}}}{\lfl T/d\rfl}.
\end{align}
Then,
\begin{align}
\Pr_{\infty}\lp\tau\leq T\rp
&\geq\Pr_{\infty}\lp\tau< d\lfl T/d\rfl\rp
=\!\!\sum_{i=0}^{\lfl T/d\rfl-1}\!\!\Pr_{\infty}\lp d i\leq\tau
<d\lp i+1\rp\rp>\!\!\sum_{i=0}^{\lfl T/d \rfl-1}\!\!\frac{\delta_{\mathrm{F}}}{\lfl T/d \rfl}
=\delta_{\mathrm{F}}.
\end{align}
This leads to a contradiction since $\Pr_{\infty}\lp\tau\leq T\rp\leq\delta_{\mathrm{F}}$.
\end{proof}

Since $\tau$ satisfies the false alarm probability constraint $\Pr_{\infty}\lp\tau\leq T\rp\leq\delta_{\mathrm{F}}$, by Lemma \ref{lem:LowBoundFA}, there exists a change-point $\tilnu \in\lb T-\ell\rb$ such that
\begin{align}
    \Pr_{\infty}\lp\tilnu\leq\tau < \tilnu+\ell\rp\leq\frac{\delta_{\mathrm{F}}}{\lfl T/\ell\rfl}.\label{eq:FA_window_tilnu}
\end{align}
Recall that $\lambda$ is the dominating measure of $f_{0}$ and $f_{1}$. For this choice of $\tilnu$, we have
\begin{align}\nonumber
\Pr_{\tilnu}\lp\mcal{B}\rp&=\int_{\mcal{B}}\prod_{i=1}^{\tilnu-1}f_{0}\lp x_{i}\rp\prod_{i=\tilnu}^{\tilnu+\ell-1}f_{1}\lp x_{i}\rp\otimes_{i=1}^{\tilnu+\ell-1} d \lambda\lp x_{i}\rp\\\nonumber  &=\int_{\mcal{B}}\prod_{i=\tilnu}^{\tilnu+\ell-1}\frac{f_{1}\lp x_{i}\rp}{f_{0}\lp x_{i}\rp}\prod_{i=1}^{\tilnu+\ell-1}f_{0}\lp x_{i}\rp\otimes_{i=1}^{\tilnu+\ell-1} d \lambda\lp x_{i}\rp\\\nonumber
&\overset{\aaaa}{\leq}\int_{\mcal{B}}\exp\lp \ell c\rp\prod_{i=1}^{\tilnu+\ell-1}f_{0}\lp x_{i}\rp\otimes_{i=1}^{\tilnu+\ell-1} d \lambda\lp x_{i}\rp\\\nonumber
&=\exp\lp \ell c\rp\int_{\mcal{B}}\prod_{i=1}^{\tilnu+\ell-1}f_{0}\lp x_{i}\rp\otimes_{i=1}^{\tilnu+\ell-1} d \lambda\lp x_{i}\rp\\\nonumber
&\overset{\bbbb}{=}\exp\lp \ell c\rp\Pr_{\infty}\lp\mcal{B}\rp\\\nonumber
&\overset{\cccc}{\leq}\exp\lp \ell c\rp\Pr_{\infty}\lp\tilnu\leq\tau<\tilnu+\ell\rp\\
&\overset{\dddd}{\leq}\frac{\exp\lp \ell c\rp\delta_{\mathrm{F}}}{\lfl T/\ell\rfl}\label{eq:thm_1_sketch_2}
\end{align}
where step $(a)$ results from the definition of $\mcal{B}$ in \eqref{eq:Bc}, and step $(b)$ stems from the fact that under $\Pr_{\infty}$, every $X_{i}$ follows the density $f_{0}$. Step $(c)$ is owing to the fact that $\mcal{B}\subseteq\lbp\tilnu\leq\tau<\tilnu+\ell\rbp$, and step $(d)$ is due to \eqref{eq:FA_window_tilnu}.

Then, we derive an upper bound on $\Pr_{\nu}\lp\mcal{A}\rp$: for any $\nu\in\lb T-\ell \rb$,
\begin{align}
\Pr_{\nu}\lp\mcal{A}\rp
&=\Pr_{\nu}\lp\nu\leq\tau<\nu+\ell,\sum_{i=\nu}^{\nu+\ell-1}\log\lp\frac{f_{1}\lp X_{i}\rp}{f_{0}\lp X_{i}\rp}\rp\geq \ell c\rp \nonumber\\
&\leq\Pr_{\nu}\lp\sum_{i=\nu}^{\nu+\ell-1}\log\lp\frac{f_{1}\lp X_{i}\rp}{f_{0}\lp X_{i}\rp}\rp\geq \ell c\rp \nonumber\\
&=\Pr_{\nu}\lp\prod_{i=\nu}^{\nu+\ell-1}\frac{f_{1}\lp X_{i}\rp}{f_{0}\lp X_{i}\rp}\geq e^{\ell c}\rp \nonumber\\
&\overset{(a)}{\leq}e^{-\ell c}\E_{\nu}\lb\prod_{i=\nu}^{\nu+\ell-1}\frac{f_{1}\lp X_{i}\rp}{f_{0}\lp X_{i}\rp}\rb \nonumber\\
&=e^{-\ell c}\prod_{i=\nu}^{\nu+\ell-1}\E_{\nu}\lb\frac{f_{1}\lp X_{i}\rp}{f_{0}\lp X_{i}\rp}\rb \nonumber\\
&\overset{(b)}{\leq}e^{-\ell\lp c-K\rp}\label{eq:thm_1_sketch_3}
\end{align}
where step $(a)$ results from the Markov inequality.
Step $(b)$ is due to the definition of $K$ in \eqref{eq:C_const}.

Next, define $s \coloneqq \lp T - \ell \rp/T$, $\tilde{c} \coloneqq c + \log\lp \ell\rp / \ell$, and $\varepsilon \coloneqq e^{-\ell\lp c-K\rp}$. By plugging \eqref{eq:thm_1_sketch_2} and \eqref{eq:thm_1_sketch_3} into \eqref{eq:thm_1_sketch_1} with $\nu=\tilnu$, we have 
\begin{align}
\delta_{\mathrm{D}}
&\geq1-\delta_{\mathrm{F}}-\frac{\exp\lp \ell c\rp\delta_{\mathrm{F}}}{\lfl T/\ell\rfl}-e^{-\ell\lp c-K\rp} \nonumber\\
&\geq 1-\delta_{\mathrm{F}}-\frac{\ell\exp\lp \ell c\rp\delta_{\mathrm{F}}}{T-\ell}-e^{-\ell\lp c-K\rp} \nonumber\\
&=1-\delta_{\mathrm{F}}-\frac{\delta_{\mathrm{F}}}{sT}\exp\lp\tilde{c}\ell\rp-\varepsilon\label{eq:thm_1_sketch_4}
\end{align}
Rearranging the terms in \eqref{eq:thm_1_sketch_4}, we obtain
\begin{align}
\frac{\delta_{\mathrm{F}}}{sT}\exp\lp\tilde{c}\ell\rp\geq1-\delta_{\mathrm{F}}-\delta_{\mathrm{D}}-\varepsilon.\label{eq:thm_1_sketch_6}
\end{align}
By taking $\log$ on both sides, we have
\begin{align}
\tilde{c}\ell\geq&\log\lp T\rp+\log\lp\frac{1}{\delta_{\mathrm{F}}}\rp+\log\lp s\rp+\log\lp1-\delta_{\mathrm{F}}-\delta_{\mathrm{D}}-\varepsilon\rp. \label{eq:thm_1_sketch_7}
\end{align}
This implies that
\begin{align}
\ell\geq&\frac{1}{\tilde{c}}\lb\log\lp T\rp+\log\lp\frac{1}{\delta_{\mathrm{F}}}\rp+\log\lp s\rp+\log\lp1-\delta_{\mathrm{F}}-\delta_{\mathrm{D}}-\varepsilon\rp\rb.
\end{align}
With the following lemma, we can show that $\tilde{c} \to c$ and $\varepsilon \to 0$ as $T \to \infty$.

\begin{lemma}\label{lem:d_inf}
    For any $\delta_{\mathrm{F}},\delta_{\mathrm{D}}\in\lp0,1\rp$ that satisfy $\delta_{\mathrm{F}}+\delta_{\mathrm{D}}<1$, $\ell\rightarrow\infty$ as $T\rightarrow\infty$
\end{lemma}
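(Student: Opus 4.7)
The plan is to argue by contradiction using inequality \eqref{eq:thm_1_sketch_4}, which was derived earlier in the proof of Theorem \ref{thm:low_bound_latency} and which holds for every $c > K$ and every stopping time satisfying the false alarm constraint. Suppose, for the sake of contradiction, that $\ell$ does not diverge as $T \to \infty$. Since $\ell$ takes values in $\mbb{N}$, the negation of $\ell \to \infty$ means that $\ell$ is bounded along some subsequence of horizons; I can therefore extract a subsequence $\lbp T_{n} \rbp$ with $T_{n} \to \infty$ along which $\ell$ equals a fixed value $\ell^{*} \in \mbb{N}$. The objective is to show that this constancy is inconsistent with the assumption $\delta_{\mathrm{F}} + \delta_{\mathrm{D}} < 1$.

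The first step is to apply \eqref{eq:thm_1_sketch_4} along this subsequence with $c > K$ held fixed and let $T_{n} \to \infty$. With $\ell = \ell^{*}$ constant, the factor $\exp(\tilde{c}\ell^{*}) = \ell^{*} e^{c\ell^{*}}$ is a finite constant, while $sT_{n} \to \infty$ (since $s = (T_{n}-\ell^{*})/T_{n} \to 1$), so the false-alarm-related term $\frac{\delta_{\mathrm{F}}}{sT_{n}}\exp(\tilde{c}\ell^{*})$ vanishes. The inequality therefore reduces to $\delta_{\mathrm{D}} \geq 1 - \delta_{\mathrm{F}} - e^{-\ell^{*}(c-K)}$.

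The second step is to let $c \to \infty$ in this residual inequality. This is legitimate because $\ell^{*}$ was fixed before $c$ was chosen and the original bound holds for every $c > K$. The change-of-measure error term $e^{-\ell^{*}(c-K)}$ then tends to zero, yielding $\delta_{\mathrm{D}} \geq 1 - \delta_{\mathrm{F}}$, which directly contradicts the hypothesis $\delta_{\mathrm{F}} + \delta_{\mathrm{D}} < 1$. The main point requiring care is the order of the two limits: it is essential to send $T_{n} \to \infty$ \emph{before} $c \to \infty$, and this is exactly what the constancy of $\ell$ along the subsequence allows --- if instead $\ell$ were allowed to grow with $T$, the quantity $\exp(\tilde{c}\ell)$ in the false-alarm term could blow up and spoil the first limit, and one could not cleanly send $c \to \infty$ afterwards.
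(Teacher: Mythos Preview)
Your proof is correct and follows essentially the same contradiction strategy as the paper: invoke \eqref{eq:thm_1_sketch_4}, send the horizon to infinity first while the latency stays bounded, and then eliminate the residual term $e^{-\ell(c-K)}$ to force $\delta_{\mathrm{D}}\geq 1-\delta_{\mathrm{F}}$. The only difference is in how the residual is killed: the paper replaces $\ell$ by an auxiliary $\bar{\ell}$ larger than the limit and sends $\bar{\ell}\to\infty$, whereas you keep $\ell^{*}$ fixed and send $c\to\infty$; your variant is arguably cleaner, since it uses \eqref{eq:thm_1_sketch_4} exactly as stated and sidesteps both the monotonicity-of-$\ell$ claim and the need to justify that the inequality persists with $\bar{\ell}$ in place of $\ell$.
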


\begin{proof}
We prove the lemma by contradiction. First, note that the latency $\ell$ is increasing with $T$ for fixed $\delta_{\mathrm{F}}$ and $\delta_{\mathrm{D}}$. Thus, either the limit of $\ell$ exists or $\ell$ goes to infinity as $T\rightarrow\infty$. Assume that when $\delta_{\mrm{F}} + \delta_{\mrm{D}} < 1$, $\ell$ does not go to infinity as $T\rightarrow\infty$. Then, for any $\bar{\ell}$ larger than the limit of $\ell$, $T\in\mbb{N}$, and $c>C$, by \eqref{eq:thm_1_sketch_4}, we have:
\begin{align}
\delta_{\mathrm{D}}&\geq1-\delta_{\mathrm{F}}-\frac{\bar{\ell}\exp\lp c\bar{\ell}\rp\delta_{\mathrm{F}}}{T-\bar{\ell}}-e^{-\bar{\ell}\lp c-C\rp}.\label{eq:lem_4_proof_1}
\end{align}
Then, by taking $T\rightarrow\infty$, we have:
\begin{align}
\delta_{\mathrm{D}}\geq1-\delta_{\mathrm{F}}-e^{-\bar{\ell}\lp c-C\rp}.\label{eq:lem_4_proof_2}
\end{align}
Last, by taking $\bar{\ell}\rightarrow\infty$, we have $\delta_{\mathrm{D}}\geq1-\delta_{\mathrm{F}}$. Since $\delta_{\mathrm{F}}+\delta_{\mathrm{D}}<1$, this assumption leads to a contradiction.
\end{proof}

By Lemma \ref{lem:d_inf}, it is obvious to see that $\tilde{c} = c + \log\lp\ell\rp/\ell \to c$ and $\varepsilon = e^{-\ell\lp c-K\rp} \to 0$ as $T \to \infty$. 
Additionally, by Theorem \ref{thm:latency_pre-post-known}, $\ell \leq \inf_{\theta\in\lp0,1\rp} \lbp \frac{1}{\lvert\Lambda\lp\theta\rp\rvert} \lb\log\lp\frac{1}{\delta_{\mathrm{D}}}\rp +\theta\beta_{\mrm{C}, r}\lp T, \delta_{\mrm{F}}, r\rp\rb\rbp$ for any $r>1$. Hence, $\ell = \mcal{O}\lp\log\lp T\rp\rp$, implying that $s = \lp T - \ell\rp/\ell \to 1$ as $T \to \infty$. By making $c$ arbitrarily close to $K$ as $T\rightarrow\infty$, \eqref{eq:thm_1_sketch_7} leads to \eqref{eq:d_lower} in the theorem.

\section{Proof of Lemma \ref{lem:GLR-kl-unknown-pre-post}}
\label{sec:lem1}

Recall that $f_{\mu}$ is the Gaussian density with mean $\mu$ and variance $\sigma^{2}$, $\hat{\mu}_{m:n}$ is the empirical mean of the stochastic observations $\lbp X_{m}, \dots, X_{n} \rbp$, and \textup{$ D \lp x;y\rp\coloneqq(x-y)^{2}/(2\sigma^{2})$} is the KL-divergence between two Gaussian distributions with common variance $\sigma^{2}$ and different means $x,y\in\mbb{R}$. We can show that for any $n\in\mbb{N}$ and  $s\in\lb n\rb$,
\begin{align}
&\log\lp\frac{\sup_{\mu_{0}'\in\mbb{R}}\prod_{i=1}^{s}f_{\mu_{0}'}\lp X_{i}\rp\sup_{\mu_{1}'\in\mbb{R}}\prod_{i=s+1}^{n}f_{\mu_{1}'}\lp X_{i}\rp}{\sup_{\mu\in\mbb{R}}\prod_{i=1}^{n}f_{\mu}\lp X_{i}\rp}\rp\nonumber\\
    &=\log\lp\frac{\sup_{\mu_{0}'\in\mbb{R}}\prod_{i=1}^{s}\frac{1}{\sqrt{2\pi\sigma^{2}}}\exp\lp-\frac{\lp X_{i}-\mu_{0}'\rp^{2}}{2\sigma^{2}}\rp\sup_{\mu_{1}'\in\mbb{R}}\prod_{i=s+1}^{n}\frac{1}{\sqrt{2\pi\sigma^{2}}}\exp\lp-\frac{\lp X_{i}-\mu_{1}'\rp^{2}}{2\sigma^{2}}\rp}{\sup_{\mu\in\mbb{R}}\prod_{i=1}^{n}\frac{1}{\sqrt{2\pi\sigma^{2}}}\exp\lp-\frac{\lp X_{i}-\mu\rp^{2}}{2\sigma^{2}}\rp}\rp\nonumber\\
    &=\log\lp\frac{\exp\lp-\inf_{\mu_{0}'\in\mbb{R}}\sum_{i=1}^{s}\frac{\lp X_{i}-\mu_{0}'\rp^{2}}{2\sigma^{2}}\rp\exp\lp-\inf_{\mu_{1}'\in\mbb{R}}\sum_{i=s+1}^{n}\frac{\lp X_{i}-\mu_{1}'\rp^{2}}{2\sigma^{2}}\rp}{\exp\lp-\inf_{\mu\in\mbb{R}}\sum_{i=1}^{n}\frac{\lp X_{i}-\mu\rp^{2}}{2\sigma^{2}}\rp}\rp\nonumber\\
    &\overset{(a)}{=}\log\lp\frac{\exp\lp-\sum_{i=1}^{s}\frac{\lp X_{i}-\hat{\mu}_{1:s}\rp^{2}}{2\sigma^{2}}\rp\exp\lp-\sum_{i=s+1}^{n}\frac{\lp X_{i}-\hat{\mu}_{s+1:n}\rp^{2}}{2\sigma^{2}}\rp}{\exp\lp-\sum_{i=1}^{n}\frac{\lp X_{i}-\hat{\mu}_{1:n}\rp^{2}}{2\sigma^{2}}\rp}\rp\nonumber\\
    &=\log\lp\frac{\exp\lp-\sum_{i=1}^{s}\frac{X_{i}^{2}-2X_{i}\hat{\mu}_{1:s}+\hat{\mu}_{1:s}^{2}}{2\sigma^{2}}\rp\exp\lp-\sum_{i=s+1}^{n}\frac{X_{i}^{2}-2X_{i}\hat{\mu}_{s+1:n}+\hat{\mu}_{s+1:n}^{2}}{2\sigma^{2}}\rp}{\exp\lp-\sum_{i=1}^{n}\frac{X_{i}^{2}-2X_{i}\hat{\mu}_{1:n}+\hat{\mu}_{1:n}^{2}}{2\sigma^{2}}\rp}\rp\nonumber\\
    &=\sum_{i=1}^{s}\frac{2X_{i}\hat{\mu}_{1:s}-\hat{\mu}_{1:s}^{2}}{2\sigma^{2}}+\sum_{i=s+1}^{n}\frac{2X_{i}\hat{\mu}_{s+1:n}-\hat{\mu}_{s+1:n}^{2}}{2\sigma^{2}}-\sum_{i=1}^{n}\frac{2X_{i}\hat{\mu}_{1:n}-\hat{\mu}_{1:n}^{2}}{2\sigma^{2}}\nonumber\\
    &=s\frac{\hat{\mu}_{1:s}^{2}}{2\sigma^{2}}+\lp n-s\rp\frac{\hat{\mu}_{s+1:n}^{2}}{2\sigma^{2}}-n\frac{\hat{\mu}_{1:n}^{2}}{2\sigma^{2}}\nonumber\\
    &=s\frac{\hat{\mu}_{1:s}^{2}}{2\sigma^{2}}+\lp n-s\rp\frac{\hat{\mu}_{s+1:n}^{2}}{2\sigma^{2}}+s\frac{\hat{\mu}_{1:n}^{2}}{2\sigma^{2}}+\lp n-s\rp\frac{\hat{\mu}_{1:n}^{2}}{2\sigma^{2}}-2\frac{s\hat{\mu}_{1:s}\hat{\mu}_{1:n}}{2\sigma^{2}}-2\frac{\lp n-s\rp\hat{\mu}_{s+1:n}\hat{\mu}_{1:n}}{2\sigma^{2}}\nonumber\\
    &=s\frac{\lp\hat{\mu}_{1:s}-\hat{\mu}_{1:n}\rp^{2}}{2\sigma^{2}}+\lp n-s\rp\frac{\lp\hat{\mu}_{s+1:n}-\hat{\mu}_{1:n}\rp^{2}}{2\sigma^{2}}\nonumber\\
    &=s D \lp\hat{\mu}_{1:s};\hat{\mu}_{1:n}\rp+\lp n-s\rp D \lp\hat{\mu}_{s+1:n};\hat{\mu}_{1:n}\rp\label{eq:lem1_proof}
\end{align}
where step $(a)$ follows from the fact that $\sum_{i=t}^{t'}\lp X_{i}-a\rp^{2}$ is minimized when $a=\hat{\mu}_{t:t'}$.

\section{Proof of Theorem \ref{thm:latency_pre-post-unknown}} \label{sec:thm3}

Similar to Theorem \ref{thm:latency_pre-post-known}, there are two change detectors to consider: the GLR and GSR tests. For each test, there are two parts to prove: the false alarm probability $\Pr_{\infty} \lp \tau \leq T \rp$ and the late detection probability $\Pr_{\nu} \lp \tau \geq \nu + \ell \rp$. To prove these probability upper bounds, we first express the statistic using the empirical mean of the observations, so that we can exploit the sub-Gaussianity of the samples and apply concentration inequalities in manner similar to the approach in \citep{besson2022efficient, kaufmann2021mixture}.

We first prove the result for the GLR test in \eqref{eq:GLR_tau-pre-post-unknown}: Recall that $\hat{\mu}_{m:n}$ is the empirical mean of the stochastic observations $\lbp X_{m}, \dots, X_{n} \rbp$ and \textup{$ D \lp x;y\rp\coloneqq(x-y)^{2}/(2\sigma^{2})$} is the KL-divergence between two Gaussian distributions with common variance $\sigma^{2}$ and different means $x,y\in\mbb{R}$. We use the following lemma as our concentration inequality for the false alarm probability upper bound of the GLR test.

\begin{lemma} \label{lem:mix_martin}
Let $\lp X_{n}\rp_{n=1}^{\infty}$ be an i.i.d. $\sigma^{2}$-sub-Gaussian sequence with mean $\mu$, then for all $\delta\in\lp0,1\rp$,
\textup{\begin{align}
    &\Pr_{\infty}\lp\exists\,n\in\mbb{N}:n D \lp\hat{\mu}_{1:n},\mu\rp-3\log\lp1+\log\lp n\rp\rp>\frac{5}{4}\log\lp\frac{1}{\delta}\rp+\frac{11}{2}\rp\leq\delta.\label{eq:mix_mart}
\end{align}}
\end{lemma}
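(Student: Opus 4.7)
The plan is to apply the method of mixtures to build a nonnegative supermartingale whose boundary-crossing probability is controlled by Ville's inequality, following the template of Kaufmann and Koolen \citep{kaufmann2021mixture}. First, let $Z_{n} \coloneqq \sum_{i=1}^{n}(X_{i} - \mu)$. Since each $X_{i} - \mu$ is $\sigma^{2}$-sub-Gaussian, for every $\lambda \in \mbb{R}$ the process $M_{n}^{\lambda} \coloneqq \exp\lp \lambda Z_{n} - n\sigma^{2}\lambda^{2}/2 \rp$ is a nonnegative supermartingale with $\E_{\infty}[M_{n}^{\lambda}] \leq 1$. Observe that $nD(\hat{\mu}_{1:n};\mu) = Z_{n}^{2}/(2n\sigma^{2})$ equals $\log M_{n}^{\lambda^{\star}}$ at the pointwise MLE $\lambda^{\star} = Z_{n}/(n\sigma^{2})$, so a Chernoff bound at a fixed $n$ is trivial; the real obstacle is obtaining a bound that holds simultaneously for all $n \in \mbb{N}$.

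Second, I would mix $M_{n}^{\lambda}$ against a prior density $h$ on $\lambda$: define $\bar{M}_{n} \coloneqq \int_{\mbb{R}} M_{n}^{\lambda}\, h(\lambda)\, d\lambda$, which by Fubini is again a nonnegative supermartingale with $\E_{\infty}[\bar{M}_{0}] \leq 1$. Taking $h$ to be a Gaussian density $\mcal{N}(0,(n_{0}\sigma^{2})^{-1})$ for a scale parameter $n_{0} > 0$, the Gaussian--Gaussian convolution yields the closed form
\begin{equation*}
\bar{M}_{n} \;=\; \sqrt{\frac{n_{0}}{n+n_{0}}}\, \exp\lp \frac{Z_{n}^{2}}{2\sigma^{2}(n+n_{0})} \rp.
\end{equation*}
To avoid committing to a single $n_{0}$ in advance, I would stitch: average the above expression over a geometric grid $n_{0} \in \lbp 2^{k} : k \geq 0 \rbp$ with weights $w_{k} \propto 1/(k+1)^{2}$, producing a stitched supermartingale $\wtild{M}_{n}$ that, at each $n$, has a scale within a constant factor of $n$ and incurs only an $\mcal{O}(\log\log n)$ stitching penalty --- this is precisely the mechanism that produces the $3\log(1+\log n)$ term.

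Third, I would apply Ville's inequality, $\Pr_{\infty}(\exists n : \wtild{M}_{n} \geq 1/\delta) \leq \delta$, take logarithms, substitute the closed form, and invert to isolate $nD(\hat{\mu}_{1:n};\mu)$. The main obstacle will be the constant-tracking in this inversion: the scale grid, the discrete mixing weights, and the elementary logarithm / AM--GM inequalities used to simplify the Gaussian prefactor and the exponent denominator must be orchestrated so that the coefficient of $\log(1+\log n)$ comes out to exactly $3$, the coefficient of $\log(1/\delta)$ to exactly $5/4$, and all of the accumulated slack is absorbed into the additive constant $11/2$. This tuning is tedious but routine in the mixture-martingale literature, and for the stitched boundary above it reduces to one-dimensional calculus over the weight and scale parameters.
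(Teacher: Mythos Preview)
Your proposal is correct and follows essentially the same approach as the paper: both rely on the mixture-martingale machinery of Kaufmann and Koolen \citep{kaufmann2021mixture} together with Ville's inequality. The only difference is packaging: the paper invokes \cite[Lemma~13]{kaufmann2021mixture} as a black box---it asserts the existence of a nonnegative mixture martingale $Z_{n}$ with $Z_{0}=1$ satisfying $\{Y_{n}-c\geq x\}\subseteq\{Z_{n}\geq e^{4x/5}\}$ (where $c=\tfrac{5}{4}\log\!\big(\tfrac{\pi^{2}/3}{(\log(5/4))^{2}}\big)\leq\tfrac{11}{2}$), then applies Ville and absorbs the slack $11/2-c$ into the additive constant---whereas you sketch how that mixture martingale is actually built via Gaussian mixtures and geometric stitching. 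Your plan is therefore a reconstruction of the cited lemma rather than a new route; the only risk is the constant-tracking you flag, but since those constants are exactly what Lemma~13 delivers, there is no genuine gap.
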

\begin{proof} [Proof of Lemma \ref{lem:mix_martin}]
Let $Y_{n} \coloneqq n D \lp\hat{\mu}_{1:n},\mu\rp-3\log\lp1+\log\lp n\rp\rp$. To prove this lemma, we associate the random process $\{Y_{n}\}$ with a supermartingale, so that we can apply Ville's inequality. To this end, we use the following lemma derived in \citep{kaufmann2021mixture} to construct the associated supermartingale. 
\begin{lemma}[Lemma 13 from \citep{kaufmann2021mixture}]
Let $c\coloneqq\frac{5}{4}\log\lp\frac{\pi^{2}/3}{\lp\log\lp5/4\rp\rp^{2}}\rp$. For any $x > 0$, there exists a nonnegative (mixture) martingale $\lbp Z_{n} \rbp_{n=0}^{\infty}$ such that $Z\lp 0 \rp=1$, and that for any $x>0$ and $n\in\mbb{N}$,
\begin{equation}
    \lbp Y_{n}-c\geq x\rbp\subseteq\lbp Z_{n}\geq e^{\frac{4x}{5}}\rbp\label{eq:mix_martin_include-1}.
\end{equation}
\end{lemma}
Continuing with the proof of Lemma~\ref{lem:mix_martin}, for any $\lambda>0$ and $z>1$, we have:
\begin{align}
    \lbp e^{\lambda\lb Y_{n}-11/2\rb}\geq z\rbp&\overset{(a)}{\subseteq}\lbp e^{\lambda\lb Y_{n}-c\rb}\geq z\rbp \nonumber\\
    &=\lbp Y_{n} - c\geq \frac{\log\lp z\rp}{\lambda}\rbp \nonumber\\
    &\overset{(b)}{\subseteq}\lbp Z_{n}\geq e^{\frac{4\log\lp z\rp}{5\lambda}}\rbp \nonumber\\
    &=\lbp Z_{n}\geq z^{\frac{4}{5\lambda}}\rbp \nonumber\\
    &=\lbp\lp Z_{n}\rp^{5\lambda/4}\geq z\rbp\label{eq:mix_martin_include-2}
\end{align}
where step $(a)$ is owing to the fact that $\frac{11}{2} \geq c$, and step $(b)$ is due to \eqref{eq:mix_martin_include-1}. When $\lambda\leq\frac{4}{5}$, because $g \lp x \rp = x^{5\lambda/4}$ is a concave function, $\lp Z_{n}\rp^{5\lambda/4}$ is a supermartingale.
Hence, for any $\lambda\in\lp0,\frac{4}{5}\rb$, and $y > 11/2$, we have the following inequality:
\begin{align}\nonumber
    \Pr_{\infty}\lp\exists\,n\in\mbb{N}:Y_{n}>y\rp&=\Pr_{\infty}\lp\cup_{n=1}^{\infty}\lbp Y_{n}>y\rbp\rp\\\nonumber
    &=\Pr_{\infty}\lp\cup_{n=1}^{\infty}\lbp e^{\lambda Y_{n}}>e^{\lambda y}\rbp\rp\\\nonumber
    &=\Pr_{\infty}\lp\cup_{n=1}^{\infty}\lbp e^{\lambda\lb Y_{n} - 11/2\rb}>e^{\lambda\lp y - 11/2 \rp}\rbp\rp\\\nonumber
    &\overset{(a)}{\leq}\Pr_{\infty}\lp\cup_{n=1}^{\infty}\lbp\lp Z_{n}\rp^{5\lambda/4}>e^{\lambda \lp y - 11/2 \rp}\rbp\rp\\\nonumber
    &=\Pr_{\infty}\lp\exists\,n\in\mbb{N}:\lp Z_{n}\rp^{5\lambda/4}>e^{\lambda\lp y-11/2\rp}\rp\\\nonumber
    &=\Pr_{\infty}\lp \sup_{n \in \mbb{N}} \lp Z_{n}\rp^{5\lambda/4}>e^{\lambda\lp y-11/2\rp}\rp\\
    &\overset{(b)}{\leq}e^{-\lambda\lp y-11/2 \rp}\label{eq:Ville}
\end{align}
where step $(a)$ is due to \eqref{eq:mix_martin_include-2} and step $(b)$ comes from Ville's inequality \citep{ville1939etude}. By minimizing \eqref{eq:Ville} over $\lambda\in\lp0,\frac{4}{5}\rb$, we obtain
\begin{align}
    \Pr_{\infty}\lp\exists\,n\in\mbb{N}:Y_{n}>y\rp&\leq e^{-\frac{4}{5}\lp y-11/2 \rp}\label{eq:Minimize}.
\end{align}
Then, by letting $\delta=e^{-\frac{4}{5}\lp y-11/2 \rp}$, we can see that for any $\delta\in\lp0,1\rp$,
\begin{align}
    \Pr_{\infty}\lp\exists\,n\in\mbb{N}:Y_{n}>\frac{5}{4}\log\lp\frac{1}{\delta}\rp + \frac{11}{2} \rp&\leq\delta\label{eq:lemma-end}.
\end{align}
\end{proof}

Continuing with the proof of the upper bound on the false alarm probability, for any $T\in\mbb{N}$, we can upper bound the false alarm probability as follows:
\begin{align}\nonumber
    &\Pr_{\infty}\lp \tau_{\mrm{GLR}} \leq T\rp\\\nonumber
    &\leq \Pr_{\infty}\lp \tau_{\mrm{GLR}}<\infty\rp\\\nonumber
    &\overset{(a)}{=}\Pr_{\infty}\lp\exists\,n\in\mbb{N}:\;\sup_{k\in\lb n \rb} k D \lp\hat{\mu}_{1:k},\hat{\mu}_{1:n}\rp+\lp n-k\rp D \lp\hat{\mu}_{k+1:n},\hat{\mu}_{1:n}\rp\geq\beta_{\mrm{GLR}} \lp n,\delta_{\mrm{F}}\rp\rp\\\nonumber
    &=\Pr_{\infty}\left(\exists\,k\leq n\in\mathbb{N}:k D (\hat{\mu}_{1:k},\hat{\mu}_{1:n})+(n-k) D (\hat{\mu}_{k+1:n},\hat{\mu}_{1:n})> \beta_{\mrm{GLR}} \lp n,\delta_{\mrm{F}}\rp\right)\\\nonumber
    &\overset{(b)}{=}\Pr_{\infty}\left(\exists\,k\leq n\in\mbb{N}:\inf_{\mu}k D (\hat{\mu}_{1:k},\mu)+(n-k) D (\hat{\mu}_{k+1:n},\mu)> \beta_{\mrm{GLR}}(n,\delta_{\mrm{F}})\right)\\\nonumber
    &\leq\Pr_{\infty}\left(\exists\,k\leq n\in\mathbb{N}:k D (\hat{\mu}_{1:k},\mu_{0})+(n-k) D (\hat{\mu}_{k+1:n},\mu_{0})> \beta_{\mrm{GLR}}(n,\delta_{\mrm{F}})\right)\\\nonumber
    &=\Pr_{\infty}\Bigg(\exists\,k,r\in\mathbb{N}: s D (\hat{\mu}_{1:k},\mu_{0})+r D (\hat{\mu}_{k+1:k+r},\mu_{0})\\\nonumber
    &\quad\quad\quad\quad >6\log(1+\log(k+r))+\frac{5}{2}\log\lp\frac{4\lp k+r\rp^{\frac{3}{2}}}{\delta_{\mrm{F}}}\rp+11\Bigg)\\\nonumber
    &\overset{(c)}{\leq}\Pr_{\infty}\Bigg(\exists\,k,r\in\mathbb{N}:\lbp k D (\hat{\mu}_{1:k},\mu_{0})>3\log(1+\log(k+r))+\frac{5}{4}\log\lp\frac{4\lp k+r\rp^{3/2}}{\delta_{\mrm{F}}}\rp+\frac{11}{2}\rbp\cup\\\nonumber
    &\quad\quad\quad\Bigg\{ r D (\hat{\mu}_{k+1:k+r},\mu_{0}) > 3\log(1+\log(k+r)) +\frac{5}{4}\log\lp\frac{4\lp k+r\rp^{3/2}}{\delta_{\mrm{F}}}\rp+\frac{11}{2} \Bigg\} \Bigg)\\\nonumber
    &=\Pr_{\infty}\Bigg(\Bigg\{\exists\,k,r\in\mathbb{N}:k D (\hat{\mu}_{1:k},\mu_{0})>3\log(1+\log(k+r))+\frac{5}{4}\log\lp\frac{4\lp k+r\rp^{3/2}}{\delta_{\mrm{F}}}\rp+\frac{11}{2}\Bigg\}\cup\\\nonumber
    &\quad\quad\quad\Bigg\{\exists\,k,r\in\mathbb{N}:r D (\hat{\mu}_{k+1:k+r},\mu_{0})>3\log(1+\log(k+r))+\frac{5}{4}\log\lp\frac{4\lp k+r\rp^{3/2}}{\delta_{\mrm{F}}}\rp+\frac{11}{2}\Bigg\}\Bigg)\\\nonumber
    &\overset{(d)}{\leq}\Pr_{\infty}\bigg(\left\{\exists\,k\in\mathbb{N}:k D (\hat{\mu}_{1:k},\mu_{0})>3\log(1+\log(k))+\frac{5}{4}\log\lp\frac{4}{\delta_{\mrm{F}}}\rp+\frac{11}{2}\right\}\cup\\\nonumber
    &\quad\quad\quad\;\bigg\{\exists\,k,r\in\mathbb{N}:r D (\hat{\mu}_{k+1:k+r},\mu_{0})>3\log(1+\log(r))+\frac{5}{4}\log\lp\frac{4k^{3/2}}{\delta_{\mrm{F}}}\rp+\frac{11}{2}\bigg\}\bigg) \\\nonumber
    &\overset{(e)}{\leq}\Pr_{\infty}\left(\exists\,k\in\mathbb{N}:k D (\hat{\mu}_{1:k},\mu_{0})>3\log(1+\log(k))+\frac{5}{4}\log\lp\frac{4}{\delta_{\mrm{F}}}\rp+\frac{11}{2}\right)+\\\nonumber
    &\quad\;\Pr_{\infty}\left(\exists\,k,r\in\mathbb{N}:r D (\hat{\mu}_{k+1:k+r},\mu_{0})>3\log(1+\log(r))+\frac{5}{4}\log\lp\frac{4k^{3/2}}{\delta_{\mrm{F}}}\rp+\frac{11}{2}\right)\\\nonumber
    &=\Pr_{\infty}\left(\exists\,k\in\mathbb{N}:k D (\hat{\mu}_{1:k},\mu_{0})-3\log(1+\log(k))>\frac{5}{4}\log\lp\frac{4}{\delta_{\mrm{F}}}\rp+\frac{11}{2}\right)+\\\nonumber
    &\quad\;\Pr_{\infty}\bigg(\bigcup_{k=1}^{\infty}\bigg\{\exists\,r\in\mbb{N}:r D (\hat{\mu}_{k+1:k+r},\mu_{0})-3\log(1+\log(r))>\frac{5}{4}\log\lp\frac{4k^{3/2}}{\delta_{\mrm{F}}}\rp + \frac{11}{2}\bigg\}\bigg)\\\nonumber
    &\overset{(f)}{\leq}\Pr_{\infty}\left(\exists\,k\in\mathbb{N}:k D (\hat{\mu}_{1:k},\mu_{0})-3\log(1+\log(k))>\frac{5}{4}\log\lp\frac{4}{\delta_{\mrm{F}}}\rp+\frac{11}{2}\right)+\\\nonumber
    &\quad\;\sum_{k=1}^{\infty}\Pr_{\infty}\bigg(\exists\,r\in\mbb{N}:r D (\hat{\mu}_{k+1:k+r},\mu_{0})-3\log(1+\log(r))>\frac{5}{4}\log\lp\frac{4k^{3/2}}{\delta_{\mrm{F}}}\rp+\frac{11}{2}\bigg)\\\nonumber
    &\overset{(g)}{\leq}\frac{\delta_{\mrm{F}}}{4}+\sum_{k=1}^{\infty}\frac{\delta_{\mrm{F}}}{4k^{3/2}}\\
    &\leq\delta_{\mrm{F}}\label{eq:Prop_2_proof}
\end{align}
where step $(a)$ is due to Lemma \ref{lem:GLR-kl-unknown-pre-post} and step $(b)$ is owing to the fact that $\inf_{\mu} k \lp \hat{\mu}_{1:k} - \mu \rp^{2} + \lp n - k \rp \lp \hat{\mu}_{k + 1 : n} - \mu \rp^{2} = k \lp \hat{\mu}_{1:k} - \hat{\mu}_{1:n} \rp^{2} + \lp n - k \rp \lp \hat{\mu}_{k + 1 : n} - \hat{\mu}_{1:n} \rp^{2}$. Step $(c)$ is due to the fact that $x+y>2a$ implies $x>a$ or $y>a$. Step $(d)$ stems from the fact that $\beta\lp n,\delta\rp$ is increasing with $n$. Steps $(e)$ and $(f)$ are owing to the union bound. By Lemma \ref{lem:mix_martin}, we obtain step $(g)$. This completes the proof of the false alarm constraint in Theorem \ref{thm:latency_pre-post-unknown}.

We now move on to proving the detection delay performance $\Pr_{\nu} \lp \tau_{\mrm{GLR}} \geq \nu + \ell \rp$ in Theorem \ref{thm:latency_pre-post-unknown}. To this end, we use the following lemma borrowed from \citep{besson2022efficient} as our concentration inequality:

\begin{lemma}[Lemma 10 in \citep{besson2022efficient}] \label{lem:sub_Gaussian_diff}
    Let $\hat{\mu}_{i,s}$ be the empirical mean of $s$ i.i.d. $\sigma^{2}$-sub-Gaussian samples with mean $\mu_{i}$, $i\in\lbp0,1\rbp$. Then, $\forall\,s,r\in\mbb{N}$, we have

    \begin{equation}
        \Pr\lp\frac{sr}{s+r}\lp\lp\hat{\mu}_{0,s}-\hat{\mu}_{1,r}\rp-\lp\mu_{0}-\mu_{1}\rp\rp^{2}>u\rp\leq2\exp\lp-\frac{u}{2\sigma^{2}}\rp.
    \end{equation}
\end{lemma}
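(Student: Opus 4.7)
The plan is to apply the standard sub-Gaussian tail bound to $\hat{\mu}_{0,s}-\hat{\mu}_{1,r}-(\mu_{0}-\mu_{1})$ after identifying its effective sub-Gaussian parameter. Writing $\hat{\mu}_{0,s}=\frac{1}{s}\sum_{i=1}^{s}X_{i}$ and $\hat{\mu}_{1,r}=\frac{1}{r}\sum_{j=1}^{r}Y_{j}$ with $\lbp X_{i}-\mu_{0}\rbp_{i=1}^{s}$ and $\lbp Y_{j}-\mu_{1}\rbp_{j=1}^{r}$ independent, centered, and $\sigma^{2}$-sub-Gaussian, the quantity $\hat{\mu}_{0,s}-\hat{\mu}_{1,r}-(\mu_{0}-\mu_{1})$ is a linear combination of these $s+r$ summands with weights $1/s$ and $-1/r$ respectively. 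Closure of sub-Gaussian random variables under independent linear combinations then makes it itself sub-Gaussian with parameter
\begin{equation*}
\tilde\sigma^{2}\;=\;s\cdot\lp\frac{1}{s}\rp^{2}\sigma^{2}+r\cdot\lp\frac{1}{r}\rp^{2}\sigma^{2}\;=\;\frac{\sigma^{2}\lp s+r\rp}{sr}.
\end{equation*}

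The Chernoff bound for a $\tilde\sigma^{2}$-sub-Gaussian centered random variable gives $\Pr\lp\lba\hat{\mu}_{0,s}-\hat{\mu}_{1,r}-(\mu_{0}-\mu_{1})\rba>t\rp\leq 2\exp\lp-t^{2}/(2\tilde\sigma^{2})\rp$. Substituting $t=\sqrt{u(s+r)/(sr)}$ gives $t^{2}/(2\tilde\sigma^{2})=u/(2\sigma^{2})$ together with the event identity $\lbp\lba\hat{\mu}_{0,s}-\hat{\mu}_{1,r}-(\mu_{0}-\mu_{1})\rba>t\rbp=\lbp\frac{sr}{s+r}((\hat{\mu}_{0,s}-\hat{\mu}_{1,r})-(\mu_{0}-\mu_{1}))^{2}>u\rbp$, which yields the claimed inequality. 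There is no serious obstacle; the only two points requiring minor care are (i) making the implicit independence between the two samples explicit (inherited from the overall independence of $\lbp X_{n}\rbp$ together with $\hat{\mu}_{0,s}$ and $\hat{\mu}_{1,r}$ being averages over disjoint time windows, as in the model of Section~\ref{sec:ProbForm}) and (ii) the precise bookkeeping of $\tilde\sigma^{2}$ under weighted independent sums.
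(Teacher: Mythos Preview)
Your proof is correct. The paper itself does not prove this lemma; it simply imports it as Lemma~10 of \citep{besson2022efficient} and uses it as a black box. Your argument supplies exactly the standard self-contained proof: recognize $\hat{\mu}_{0,s}-\hat{\mu}_{1,r}-(\mu_{0}-\mu_{1})$ as an independent weighted sum of centered $\sigma^{2}$-sub-Gaussian variables, compute the resulting sub-Gaussian parameter $\tilde\sigma^{2}=\sigma^{2}(s+r)/(sr)$, and apply the two-sided Chernoff tail bound. The bookkeeping on $\tilde\sigma^{2}$ and the event identity are both correct, and the independence assumption you flag in point~(i) is indeed implicit in the lemma's hypothesis (two separate i.i.d.\ samples) and in the application context of Section~\ref{sec:ProbForm}.
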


Continuing with the proof of the latency, for convenience in notation,  using Lemma \ref{lem:GLR-kl-unknown-pre-post}, we can show that for any $T\in\mbb{N},\;\delta_{\mrm{D}},\delta_{\mrm{F}}\in\lp0,1\rp,\;\Delta>0$, $m>\frac{8\sigma^{2}}{\Delta^{2}}\beta_{\mrm{GLR}}\lp T,\delta_{\mrm{F}}\rp$, and $\nu\in\lb m+1,T-\ell \rb$, we have
\begin{align}\nonumber
    &\Pr_{\nu}\lp \tau_{\mrm{GLR}}\geq\nu+\ell\rp\\\nonumber
    &\overset{(a)}{=}\Pr_{\nu}\lp\forall\,n\in\lb \nu+\ell-1\rb:\;\sup_{k\in \lb n \rb} k D \lp\hat{\mu}_{1:k},\hat{\mu}_{1:n}\rp+\lp n-k\rp D \lp\hat{\mu}_{k+1:n},\hat{\mu}_{1:n}\rp<\beta_{\mrm{GLR}}\lp n,\delta_{\mrm{F}}\rp\rp\\\nonumber
    &\overset{(b)}{\leq}\Pr\Bigg(\sup_{k\in \lb\nu+\ell-1\rb} k D \lp\hat{\mu}_{1:k},\hat{\mu}_{1:\nu+\ell-1}\rp+ \lp\nu+\ell-1-k\rp D \lp\hat{\mu}_{k+1:\nu+\ell-1},\hat{\mu}_{1:\nu+\ell-1}\rp\\\nonumber
    &\quad\quad\quad<\beta_{\mrm{GLR}} \lp \nu+\ell-1,\delta_{\mrm{F}}\rp\Bigg)\\\nonumber
    &\leq \Pr_{\nu}\lp\lp\nu-1\rp D \lp\hat{\mu}_{1:\nu-1},\hat{\mu}_{1:\nu+\ell-1}\rp+\ell D \lp\hat{\mu}_{\nu:\nu+\ell-1},\hat{\mu}_{1:\nu+\ell-1}\rp<\beta_{\mrm{GLR}}\lp\nu+\ell-1,\delta_{\mrm{F}}\rp\rp\\\nonumber
    &=\Pr_{\nu}\Bigg(\frac{\nu-1}{2\sigma^{2}}\lp\hat{\mu}_{1:\nu-1}-\frac{\lp\nu-1\rp\hat{\mu}_{1:\nu-1}+\ell\hat{\mu}_{\nu:\nu+\ell-1}}{\nu+\ell-1}\rp^{2}\\\nonumber
    &\quad\quad\quad+\frac{\ell}{2\sigma^{2}}\lp\hat{\mu}_{\nu:\nu+\ell-1}-\frac{\lp\nu-1\rp\hat{\mu}_{1:\nu-1}+\ell\hat{\mu}_{\nu:\nu+\ell-1}}{\nu+\ell-1}\rp^{2}<\beta_{\mrm{GLR}}\lp\nu+\ell-1,\delta_{\mrm{F}}\rp\Bigg)\\
    &=\Pr_{\nu}\lp\frac{\lp\nu-1\rp \ell}{2\sigma^{2}\lp\nu+\ell-1\rp}\lp\hat{\mu}_{1:\nu-1}-\hat{\mu}_{\nu:\nu+\ell-1}\rp^{2}<\beta_{\mrm{GLR}}\lp\nu+\ell-1,\delta_{\mrm{F}}\rp\rp\label{eq:glr-ld-1}
\end{align}
where step $(a)$ comes from Lemma \ref{lem:GLR-kl-unknown-pre-post} and step $(b)$ results from $\lbp\nu+\ell-1\rbp\subseteq\lb \nu+\ell-1\rb$. 

Recall that $\mu_{0}$ and $\mu_{1}$ are the pre- and post-change means, and that the definition of $\ell$ is given in \eqref{eq:d}. For applying Lemma \ref{lem:sub_Gaussian_diff}, we need to convert $\lp\hat{\mu}_{1:\nu-1}-\hat{\mu}_{\nu:\nu+\ell-1}\rp^{2}$ in the last line of \eqref{eq:glr-ld-1} into $\lp\lp\hat{\mu}_{1:\nu-1}-\hat{\mu}_{\nu:\nu+\ell-1}\rp-\lp\mu_{0}-\mu_{1}\rp\rp^{2}$. To this end, we show that for any $\nu\in\lb m+1,T-\ell\rb$, with the choice of $m$ in \eqref{eq:m} and $\ell$ in \eqref{eq:d}, the event  $\lbp\frac{\lp\nu-1\rp \ell}{2\sigma^{2}\lp\nu+\ell-1\rp}\lp\hat{\mu}_{1:\nu-1}-\hat{\mu}_{\nu:\nu+\ell-1}\rp^{2}<\beta_{\mrm{GLR}}\lp\nu+\ell-1,\delta_{\mrm{F}}\rp\rbp$ implies another event $\lbp\frac{\lp\nu-1\rp \ell}{2\sigma^{2}\lp\nu+\ell-1\rp}\lp\lp\hat{\mu}_{1:\nu-1}-\hat{\mu}_{\nu:\nu+\ell-1}\rp-\lp\mu_{0}-\mu_{1}\rp\rp^{2}\geq\beta_{\mrm{GLR}}\lp\nu+\ell-1,\delta_{\mrm{F}}\rp\rbp$. Let $\beta = \beta_{\mrm{GLR}}$ for notational convenience. This implication can be shown as follows:
\begin{align}\nonumber
    &\lbp\frac{\lp\nu-1\rp \ell}{2\sigma^{2}\lp\nu+\ell-1\rp}\lp\hat{\mu}_{1:\nu-1}-\hat{\mu}_{\nu:\nu+\ell-1}\rp^{2}<\beta\lp\nu+\ell-1,\delta_{\mrm{F}}\rp\rbp\\\nonumber
    &\quad\cap\lbp\frac{\lp\nu-1\rp \ell}{2\sigma^{2}\lp\nu+\ell-1\rp}\lp\lp\hat{\mu}_{1:\nu-1}-\hat{\mu}_{\nu:\nu+\ell-1}\rp-\lp\mu_{0}-\mu_{1}\rp\rp^{2}<\beta\lp\nu+\ell-1,\delta_{\mrm{F}}\rp\rbp\\\nonumber
    &=\lbp\lba\hat{\mu}_{1:\nu-1}-\hat{\mu}_{\nu:\nu+\ell-1}\rba<\lp\frac{2\sigma^{2}\lp\nu+\ell-1\rp}{\lp\nu-1\rp \ell}\beta\lp\nu+\ell-1,\delta_{\mrm{F}}\rp\rp^{\frac{1}{2}}\rbp\\\nonumber
    &\quad\cap\lbp\lba\lp\hat{\mu}_{1:\nu-1}-\hat{\mu}_{\nu:\nu+\ell-1}\rp-\lp\mu_{0}-\mu_{1}\rp\rba<\lp\frac{2\sigma^{2}\lp\nu+\ell-1\rp}{\lp\nu-1\rp \ell}\beta\lp\nu+\ell-1,\delta_{\mrm{F}}\rp\rp^{\frac{1}{2}}\rbp\\\nonumber
    &\overset{(a)}{\subseteq}\lbp\lba\hat{\mu}_{1:\nu-1}-\hat{\mu}_{\nu:\nu+\ell-1}\rba<\lp\frac{2\sigma^{2}\lp\nu+\ell-1\rp}{\lp\nu-1\rp \ell}\beta\lp\nu+\ell-1,\delta_{\mrm{F}}\rp\rp^{\frac{1}{2}}\rbp\\\nonumber
    &\quad\cap\lbp\lba\mu_{0}-\mu_{1}\rba-\lba\hat{\mu}_{1:\nu-1}-\hat{\mu}_{\nu:\nu+\ell-1}\rba<\lp\frac{2\sigma^{2}\lp\nu+\ell-1\rp}{\lp\nu-1\rp \ell}\beta\lp\nu+\ell-1,\delta_{\mrm{F}}\rp\rp^{\frac{1}{2}}\rbp\\\nonumber
    &=\lbp\lba\hat{\mu}_{1:\nu-1}-\hat{\mu}_{\nu:\nu+\ell-1}\rba<\lp\frac{2\sigma^{2}\lp\nu+\ell-1\rp}{\lp\nu-1\rp \ell}\beta\lp\nu+\ell-1,\delta_{\mrm{F}}\rp\rp^{\frac{1}{2}}\rbp\\\nonumber
    &\quad\cap\lbp\lba\hat{\mu}_{1:\nu-1}-\hat{\mu}_{\nu:\nu+\ell-1}\rba>\Delta-\lp\frac{2\sigma^{2}\lp\nu+\ell-1\rp}{\lp\nu-1\rp \ell}\beta\lp\nu+\ell-1,\delta_{\mrm{F}}\rp\rp^{\frac{1}{2}}\rbp\\\nonumber
    &\subseteq\lbp\Delta<2\lp\frac{2\sigma^{2}\lp\nu+\ell-1\rp}{\lp\nu-1\rp \ell}\beta\lp\nu+\ell-1,\delta_{\mrm{F}}\rp\rp^{\frac{1}{2}}\rbp\\\nonumber
    &=\lbp\Delta^{2}<8\sigma^{2}\lp\frac{1}{\nu-1}+\frac{1}{\ell}\rp\beta\lp\nu+\ell-1,\delta_{\mrm{F}}\rp\rbp\\\nonumber
    &\overset{(b)}{\subseteq}\lbp\Delta^{2}<8\sigma^{2}\lp\frac{1}{m}+\frac{1}{\ell}\rp\beta\lp T,\delta_{\mrm{F}}\rp\rbp\\\nonumber
    &=\lbp\lp\frac{\Delta^{2}}{8\sigma^{2}\beta\lp T,\delta_{\mrm{F}}\rp}-\frac{1}{m}\rp^{-1}>\ell\rbp\\\nonumber
    &=\lbp \frac{8\sigma^{2}m\beta\lp T,\delta_{\mrm{F}}\rp}{\Delta^{2}m-8\sigma^{2}\beta\lp T,\delta_{\mrm{F}}\rp}>\lce\max\lbp\frac{8\sigma^{2}m\beta\lp T,\delta_{\mrm{F}}\rp}{\Delta^{2}m-8\sigma^{2}\beta\lp T,\delta_{\mrm{F}}\rp},\frac{\delta_{\mrm{F}}^{2/3}}{2^{16/15}\delta_{\mrm{D}}^{4/15}}-m\rbp\rce\rbp\\
    &=\emptyset\label{eq:implication_delay}
\end{align}
where step $(a)$ is due to triangle inequality and step $(b)$ is due to the fact that $\nu\geq m+1$ and $\nu\leq T-\ell$. Hence, the late detection probability can be bounded using Lemma \ref{lem:sub_Gaussian_diff} and we obtain that for any $\nu\in\lb m+1,T-\ell\rb$
\begin{equation}
\begin{aligned}
    &\Pr_{\nu}\lp\tau_{\mrm{GLR}}\geq\nu+\ell\rp\\
    &\leq\Pr_{\nu}\lp\frac{\lp\nu-1\rp \ell}{2\sigma^{2}\lp\nu+\ell-1\rp}\lp\lp\hat{\mu}_{1:\nu-1}-\hat{\mu}_{\nu:\nu+\ell-1}\rp-\lp\mu_{0}-\mu_{1}\rp\rp^{2}\geq\beta_{\mrm{GLR}}\lp\nu+\ell-1,\delta_{\mrm{F}}\rp\rp\\
    &\overset{(a)}{\leq}\Pr_{\nu}\lp\frac{\lp\nu-1\rp \ell}{2\sigma^{2}\lp\nu+\ell-1\rp}\lp\lp\hat{\mu}_{1:\nu-1}-\hat{\mu}_{\nu:\nu+\ell-1}\rp-\lp\mu_{0}-\mu_{1}\rp\rp^{2}\geq\beta_{\mrm{GLR}}\lp m+\ell,\delta_{\mrm{F}}\rp\rp\\
    &\overset{(b)}{\leq}\Pr_{\nu}\lp\frac{\lp\nu-1\rp \ell}{2\sigma^{2}\lp\nu+\ell-1\rp}\lp\lp\hat{\mu}_{1:\nu-1}-\hat{\mu}_{\nu:\nu+\ell-1}\rp-\lp\mu_{0}-\mu_{1}\rp\rp^{2}\geq\frac{5}{2}\log\lp\frac{4\lp m+\ell\rp^{3/2}}{\delta_{\mrm{F}}}\rp\rp\\
    &=\Pr_{\nu}\lp\frac{\lp\nu-1\rp \ell}{\nu+\ell-1}\lp\lp\hat{\mu}_{1:\nu-1}-\hat{\mu}_{\nu:\nu+\ell-1}\rp-\lp\mu_{0}-\mu_{1}\rp\rp^{2}\geq2\sigma^{2}\log\lp\frac{32\lp m+\ell \rp^{15/4}}{\delta_{\mrm{F}}^{5/2}}\rp\rp\\
    &\overset{(c)}{\leq}\frac{\delta_{\mrm{F}}^{5/2}}{16\lp m+\ell\rp^{15/4}}\\
    &\leq\frac{\delta_{\mrm{F}}^{5/2}}{16\lp\delta_{\mrm{F}}^{5/2}2^{-16/15}\delta^{-4/15}_{\mrm{D}}\rp^{15/4}}\\
    &=\delta_{\mrm{D}}
\end{aligned}
\end{equation}
where step $(a)$ is due to the fact that $\beta\lp n,\delta_{\mrm{F}}\rp$ is increasing with $n$, whereas step $(b)$ is owing to the fact that $\beta_{\mrm{GLR}} \lp n,\delta_{\mrm{F}}\rp\geq\frac{5}{2}\log\lp4n^{3/2}/\delta_{\mrm{F}}\rp$. Step $(c)$ comes from Lemma \ref{lem:sub_Gaussian_diff}.

For the GSR test, we first prove the upper bound on $\Pr_{\infty}\lp\tilde{\tau}_{\mrm{GSR}} \leq T \rp$ using \eqref{eq:Prop_2_proof}:
\begin{align}\nonumber
    &\Pr_{\infty}\lp\tilde{\tau}_{\mrm{GSR}} \leq T \rp\\\nonumber
    &=\Pr_{\infty}\lp\exists\,n\in\lb T\rb:\;\log \tilde{W}_{n}\geq\tilde{\beta}_{\mrm{GLR}}\lp n,\delta_{\mrm{F}}\rp + \log n \rp\\\nonumber
    &=\Pr_{\infty}\Bigg(\exists\,n\in\lb T\rb:\\\nonumber
    &\quad\quad\quad\;\sum_{k=1}^{n}\exp\lp k D \lp\hat{\mu}_{1:k},\hat{\mu}_{1:n}\rp+\lp n-k\rp D \lp\hat{\mu}_{k+1:n},\hat{\mu}_{1:n}\rp\rp\geq n\exp\lp\tilde{\beta}_{\mrm{GLR}}\lp n,\delta_{\mrm{F}}\rp\rp\Bigg)\\\nonumber
    &=\Pr_{\infty}\bigg(\exists\,n\in\lb T\rb:\\\nonumber
    &\quad\quad\quad\;\frac{1}{n}\sum_{k=1}^{n}\exp\lp k D \lp\hat{\mu}_{1:k},\hat{\mu}_{1:n}\rp+\lp n-k\rp D \lp\hat{\mu}_{k+1:n},\hat{\mu}_{1:n}\rp\rp\geq\exp\lp\tilde{\beta}_{\mrm{GLR}}\lp n,\delta_{\mrm{F}}\rp\rp\bigg)\\\nonumber
    &\overset{(a)}{\leq}\Pr_{\infty}\bigg(\exists\,n\in\lb T\rb:\\\nonumber
    &\quad\quad\quad\;\sup_{1\leq k\leq n}\exp\lp k D \lp\hat{\mu}_{1:k},\hat{\mu}_{1:n}\rp+\lp n-k\rp D \lp\hat{\mu}_{k+1:n},\hat{\mu}_{1:n}\rp\rp\geq\exp\lp\tilde{\beta}_{\mrm{GLR}}\lp n,\delta_{\mrm{F}}\rp\rp\bigg)\\\nonumber
    &=\Pr_{\infty}\bigg(\exists\,n\in\lb T\rb:\;\sup_{1\leq k\leq n}k D \lp\hat{\mu}_{1:k},\hat{\mu}_{1:n}\rp+\lp n-k\rp D \lp\hat{\mu}_{k+1:n},\hat{\mu}_{1:n}\rp\geq\tilde{\beta}_{\mrm{GLR}}\lp n,\delta_{\mrm{F}}\rp\bigg)\\\nonumber
    &=\Pr_{\infty}\lp\exists\,n\in\lb T\rb:\; \tilde{G}_{n}\geq\tilde{\beta}_{\mrm{GLR}}\lp n,\delta_{\mrm{F}}\rp\rp\\\nonumber
    &=\Pr_{\infty}\lp \tilde{\tau}_{\mrm{GLR}} \leq T \rp\\
    &\overset{(b)}{\leq}\delta_{\mrm{F}}\label{eq:gsr-fa-1}
\end{align}
where step $(a)$ results from the fact that $\sup_{1\leq i\leq n}x_{i}\leq a$ implies $\sum_{i=1}^{n}x_{i}\leq na$, whereas $(b)$ stems from \eqref{eq:Prop_2_proof}. 

Next, we prove the upper bound on the late detection probability $\Pr_{\nu}\lp\tau_{\mrm{GSR}}\geq\nu+\ell\rp$: For any $T\in\mbb{N},\;\,\delta_{\mrm{D}},\delta_{\mrm{F}}\in\lp0,1\rp,\;\Delta>0$, $m>\frac{8\sigma^{2}}{\Delta^{2}}\beta_{\mrm{GSR}}\lp T,\delta_{\mrm{F}}\rp$, and $\nu\in\lb m+1,T-\ell\rb$, we have
\begin{align}\nonumber
    &\Pr_{\nu}\lp\tau_{\mrm{GSR}}\geq\nu+\ell\rp\\\nonumber
    &=\Pr_{\nu}\lp\forall\,n\in \lb \nu+ \ell - 1\rb: \;\log W_{n}<\beta_{\mrm{GSR}}\lp n,\delta_{\mrm{F}}\rp \rp\\\nonumber
    &\overset{(a)}{=}\Pr_{\nu}\Bigg(\forall\,n\in\lb \nu+ \ell-1\rb:\;\log\lp\sum_{k=1}^{n}\exp\lp k D \lp\hat{\mu}_{1:k},\hat{\mu}_{1:n}\rp+\lp n-k\rp D \lp\hat{\mu}_{k+1:n},\hat{\mu}_{1:n}\rp\rp\rp\\\nonumber
    &\quad\quad\quad<\beta_{\mrm{GSR}}\lp n,\delta_{\mrm{F}}\rp\Bigg)\\\nonumber
    &\overset{(b)}{\leq}\Pr_{\nu}\Bigg(\log\lp\sum_{k=1}^{\nu+ \ell - 1} \exp\lp k D \lp\hat{\mu}_{1:k},\hat{\mu}_{1: \nu + \ell - 1} \rp + \lp \nu + \ell - 1 - k\rp D \lp\hat{\mu}_{k+1:\nu+ \ell-1},\hat{\mu}_{1:\nu+ \ell-1}\rp\rp\rp \\\nonumber
    &\quad\quad\quad<\beta_{\mrm{GSR}}\lp \nu + \ell - 1, \delta_{\mrm{F}}\rp\Bigg)\\\nonumber
    &\leq\Pr_{\nu}\lp\lp\nu-1\rp D \lp\hat{\mu}_{1:\nu-1},\hat{\mu}_{1:\nu+ \ell-1}\rp+ \ell D \lp\hat{\mu}_{\nu:\nu+ \ell-1},\hat{\mu}_{1:\nu+ \ell-1}\rp<\beta_{\mrm{GSR}}\lp\nu+ \ell-1,\delta_{\mrm{F}}\rp\rp\\\nonumber
    &=\Pr_{\nu}\Bigg(\frac{\nu-1}{2\sigma^{2}}\lp\hat{\mu}_{1:\nu-1}-\frac{\lp\nu-1\rp\hat{\mu}_{1:\nu-1}+ \ell\hat{\mu}_{\nu:\nu+ \ell-1}}{\nu+ \ell-1}\rp^{2}\\\nonumber
    &\quad\quad\quad+\frac{ \ell}{2\sigma^{2}}\lp\hat{\mu}_{\nu:\nu+ \ell-1}-\frac{\lp\nu-1\rp\hat{\mu}_{1:\nu-1}+ \ell\hat{\mu}_{\nu:\nu+ \ell-1}}{\nu+ \ell-1}\rp^{2}<\beta_{\mrm{GSR}}\lp\nu+ \ell-1,\delta_{\mrm{F}}\rp\Bigg)\\
    &=\Pr_{\nu}\lp\frac{\lp\nu-1\rp \ell}{2\sigma^{2}\lp\nu+ \ell-1\rp}\lp\hat{\mu}_{1:\nu-1}-\hat{\mu}_{\nu:\nu+ \ell-1}\rp^{2}<\beta_{\mrm{GSR}}\lp\nu+ \ell-1,\delta_{\mrm{F}}\rp\rp\label{eq:gsr-ld-1}
\end{align}
where step $(a)$ stems from Lemma \ref{lem:GLR-kl-unknown-pre-post} and step $(b)$ results from $\lbp\nu+ \ell - 1\rbp\subseteq\lb \nu+ \ell - 1 \rb$. Recall that $\mu_{0}$ and $\mu_{1}$ are the pre- and post-change means, and that the definition of $\ell$ is given in \eqref{eq:d}. In order to apply Lemma \ref{lem:sub_Gaussian_diff}, we need to convert $\lp\hat{\mu}_{1:\nu-1}-\hat{\mu}_{\nu:\nu+\ell-1}\rp^{2}$ in the last line of \eqref{eq:gsr-ld-1} into $\lp\lp\hat{\mu}_{1:\nu-1}-\hat{\mu}_{\nu:\nu+\ell-1}\rp-\lp\mu_{0}-\mu_{1}\rp\rp^{2}$. Following the same steps in \eqref{eq:implication_delay} with $\beta = \beta_{\mrm{GSR}}$, we can show that with $m$ and $\ell$ in \eqref{eq:m} and \eqref{eq:d}, the event $\lbp\frac{\lp\nu-1\rp \ell}{2\sigma^{2}\lp\nu+\ell-1\rp}\lp\hat{\mu}_{1:\nu-1}-\hat{\mu}_{\nu:\nu+\ell-1}\rp^{2}<\beta_{\mrm{GSR}}\lp \nu+\ell-1,\delta_{\mrm{F}}\rp\rbp$ implies the event $\lbp\frac{\lp\nu-1\rp \ell}{2\sigma^{2}\lp\nu+\ell-1\rp}\lp\lp\hat{\mu}_{1:\nu-1}-\hat{\mu}_{\nu:\nu+\ell-1}\rp-\lp\mu_{0}-\mu_{1}\rp\rp^{2}\geq\beta_{\mrm{GSR}}\lp \nu+\ell-1,\delta_{\mrm{F}}\rp\rbp$ for any $\nu\in\lb m+1,T-\ell\rbp$.
Then, for any $\nu\in\lb m+1,T-\ell\rb$,
\begin{align}\nonumber
    &\Pr_{\nu}\lp\tau_{\mrm{GSR}}\geq\nu+\ell\rp\\\nonumber
    &\leq\Pr_{\nu}\lp\frac{\lp\nu-1\rp \ell}{2\sigma^{2}\lp\nu\ell-1\rp}\lp\lp\hat{\mu}_{1:\nu-1}-\hat{\mu}_{\nu:\nu+\ell-1}\rp-\lp\mu_{0}-\mu_{1}\rp\rp^{2}\geq\beta_{\mrm{GSR}}\lp\nu+\ell-1,\delta_{\mrm{F}}\rp\rp\\\nonumber
    &\overset{(a)}{\leq}\Pr_{\nu}\lp\frac{\lp\nu-1\rp \ell}{2\sigma^{2}\lp\nu+\ell-1\rp}\lp\lp\hat{\mu}_{1:\nu-1}-\hat{\mu}_{\nu:\nu+\ell-1}\rp-\lp\mu_{0}-\mu_{1}\rp\rp^{2}\geq\beta_{\mrm{GSR}}\lp m+\ell,\delta_{\mrm{F}}\rp\rp\\\nonumber
    &\overset{(b)}{\leq}\Pr_{\nu}\lp\frac{\lp\nu-1\rp \ell}{2\sigma^{2}\lp\nu+\ell-1\rp}\lp\lp\hat{\mu}_{1:\nu-1}-\hat{\mu}_{\nu:\nu+\ell-1}\rp-\lp\mu_{0}-\mu_{1}\rp\rp^{2}\geq\frac{5}{2}\log\lp\frac{4\lp m+\ell\rp^{3/2}}{\delta_{\mrm{F}}}\rp\rp\\\nonumber
    &=\Pr_{\nu}\lp\frac{\lp\nu-1\rp \ell}{\nu+\ell-1}\lp\lp\hat{\mu}_{1:\nu-1}-\hat{\mu}_{\nu:\nu+\ell-1}\rp-\lp\mu_{0}-\mu_{1}\rp\rp^{2}\geq2\sigma^{2}\log\lp\frac{2^{5}\lp m+\ell \rp^{15/4}}{\delta_{\mrm{F}}^{5/2}}\rp\rp\\\nonumber
    &\overset{(c)}{\leq}\frac{\delta_{\mrm{F}}^{5/2}}{16\lp m+\ell\rp^{15/4}}\\\nonumber
    &\leq\frac{\delta_{\mrm{F}}^{5/2}}{16\lp\delta_{\mrm{F}}^{5/2}2^{-16/15}\delta^{-4/15}_{\mrm{D}}\rp^{15/4}}\\
    &=\delta_{\mrm{D}}
\end{align}
where step $(a)$ is due to the fact that $\beta_{\mrm{GSR}}\lp n,\delta_{\mrm{F}}\rp$ is increasing with $n$, whereas step $(b)$ is owing to the fact that $\beta_{\mrm{GSR}}\lp n,\delta_{\mrm{F}}\rp\geq\frac{5}{2}\log\lp4n^{3/2}/\delta_{\mrm{F}}\rp$. Step $(c)$ comes from Lemma \ref{lem:sub_Gaussian_diff}.

\end{document}